\documentclass[11pt]{article}

\usepackage{amsmath,amssymb,epsfig,cite,color,verbatim,array,bm,amsfonts,enumerate,enumitem,mathabx,amsthm}
\usepackage{graphics,float,multirow}
\usepackage[colorlinks=true,citecolor=ForestGreen,urlcolor=blue,linkcolor=blue]{hyperref}

\usepackage{blindtext}
\usepackage{microtype}
\usepackage[latin9]{inputenc}
\usepackage{authblk,booktabs,hhline}
\usepackage[table,dvipsnames]{xcolor}
\usepackage{graphicx}
\usepackage{tikz}
\usepackage{centernot}

\definecolor{light-gray}{gray}{0.9}

\setlength{\topmargin}{-0.4in}
\setlength{\headheight}{.1in}
\setlength{\headsep}{.3in}
\setlength{\footskip}{.3in}
\setlength{\textwidth}{7.0in}
\setlength{\textheight}{9.0in}
\setlength{\oddsidemargin}{-.2in}

\theoremstyle{plain}
\newtheorem{theorem}{Theorem}
\newtheorem*{theorem*}{Theorem}
\newtheorem{definition}{Definition}
\newtheorem{proposition}{Proposition}
\newtheorem{definitionTwice}{Definition}[section]

\newtheorem{lem}{Lemma}
\numberwithin{equation}{section}
\allowdisplaybreaks
\interfootnotelinepenalty=10000
\DeclareMathOperator{\Hessian}{Hess}


\title{Foundations of
ghost stability}

\author[1,2,3]{Ver\'onica
Errasti D\'iez
\footnote{veronica.errasti@cunef.edu}}

\author[3]{Jordi Gaset Rif\`a
\footnote{jordi.gaset@cunef.edu}}

\author[4,5]{Georgina
Staudt
\footnote{georgina@mpp.mpg.de}}

\affil[1]{ {\small {\it
Universit\"ats-Sternwarte,
Fakult\"at f\"ur Physik,
Ludwig-Maximilians-Universit\"at M\"unchen, Scheinerstra\ss e 1,
81679 M\"unchen,
Germany}}}

\affil[2]{ {\small {\it
Excellence Cluster ORIGINS,
Boltzmannstra\ss e 2,
85748 Garching,
Germany}}}

\affil[3]{ {\small {\it
Department of Mathematics,
CUNEF Universidad, 
Calle Pirineos 55,
28040 Madrid,
Spain}}}

\affil[4]{ {\small {\it
Max-Planck-Institut f\"ur Physik
(Werner-Heisenberg-Institut),
Boltzmannstra\ss e 8,
85748 Garching,
Germany}}}

\affil[5]{ {\small {\it 
Arnold-Sommerfeld Center for Theoretical Physics,
Theresienstra\ss e 37,
80333 M\"unchen,
Germany}}}

\date{}

\begin{document}

\maketitle

\begin{abstract}
We present a new method
to analytically prove global stability in ghost-ridden dynamical systems.
Our proposal encompasses all prior results
and consequentially extends them.
In particular,
we show that stability can follow from
a conserved quantity that is unbounded from below,
contrary to expectation.
Novel examples illustrate all our results.
Our findings take root on a careful examination
of the literature,
here comprehensively reviewed for the first time.
This work lays the mathematical basis
for ulterior extensions to field theory and quantization,
and it constitutes a gateway for inter-disciplinary research
in dynamics and integrability.
\end{abstract}

\section{Motivation}
\label{sec:intro}

Higher-order theories enjoy a long, fruitful history.
Archetypal examples include
the Pais-Uhlenbeck oscillator~\cite{PU50},
Podolsky electrodynamics~\cite{Podolsky:1942zz,Podolsky:1944zz,Podolsky:1945chv}
and the gravitational models
by Weyl~\cite{Weyl:1919fi,Weyl21}
and Lovelock~\cite{Lovelock:1971yv}.
Interest in higher-order settings is heterogeneous,
to say the least.
For illustration purposes,
we mention
geometry~\cite{Cartan,Gracia:1989ig,EtayoGordejuela:2007pt,Bruce:2014lxa,CastrillonGaset},
ultraviolet regularization~\cite{Stelle:1976gc,Grinstein:2007mp,Tomboulis:2015esa,Mannheim:2023mfp,Mannheim:2024ftu},
cosmology~\cite{Caldwell:1999ew,Nojiri:2003ft,Charmousis:2008kc,Brandenberger:2016vhg,DiValentino:2021izs}
and string theory~\cite{Freeman:1986zh,Polyakov:1986cs,McOrist:2012yc,Liu:2013dna,Calderon-Infante:2023uhz}.
Recent constructions
of novel higher-order settings~\cite{Kubo:2024ysu,Fring:2024brg,Kuzenko:2024zra,Colleaux:2024ndy,Mironov:2024ffx,Mandal:2024nhv,Mironov:2024umy}
evince topicality and augur continued zeal.

The so-called {\it Ostrogradsky theorem}~\cite{Ostrogradsky:1850fid,singOstr}
states that
higher-order Lagrangians
generically correspond to
Hamiltonians that are not bounded from below.
Such unbounded scenarios lie beyond the scope
of the Lagrange-Dirichlet theorem~\cite{Lag,Dir},
wherein strict minima of the energy
correspond to stable points of the dynamics,
in the Lyapunov sense~\cite{Lya}
--- definition \ref{def:L2} later on.
For a field theoretical treatment, see~\cite{Biro:1994bi}.

In pursuit of stability within the higher-order arena,
the most heavily transited way around the hurdle
consists in constraining 
the equations of motion
so as to restore a lower bound on the Hamiltonian.
Hefty efforts have been made
in this regard~\cite{Chen:2012au,Diaz:2014yua,Klein:2016aiq,Crisostomi:2017aim,Diaz:2017tmy,Ganz:2020skf,ErrastiDiez:2023gme}
with remarkable outcomes,
particularly in Modified Gravity. 
Influential representatives encompass
$f(R)$ theories~\cite{Sotiriou:2008rp,DeFelice:2010aj}
and de~Rham-Gabadadze-Tolley (dRGT)
massive gravity~\cite{deRham:2010kj}.
Recall that caution must be exercised
in such serpentine scenarios~\cite{Gorbunov:2005zk,BeltranJimenez:2019acz},
especially in the absence of an intrinsic definition
of the claimed pathology~\cite{Wolf:2019hzy}.
We refer to~\cite{Delhom:2022vae}
for a pedagogical presentation.

A prevalent misconception hampers
alternative progress.
The fallacious {\it Ostrogradsky instability}
affirms that the absence of a lower bound for the Hamiltonian
entails the instability of the dynamical system.
This is in direct contradiction with
a substantial amount of counterexamples,
e.g.~\cite{Pagani:1987ue,Smilga:2013vba,Kaparulin:2015owa,Abakumova:2018eck,Damour:2021fva,Deffayet:2023wdg,Heredia:2024wbu}.
For their clarity and vehemence
against the faulty identification
of Ostrogradsky theorem with instability,
we advert to~\cite{Heredia:2024wbu,Smilga:2017arl,Pavsic:2020aqi}.
In this work,
we shall exploit the breach and investigate
the classical stability of systems
whose Hamiltonian
is manifestly not bounded from below.
For concision,
we shall employ
the widespread nickname {\it ghost systems}.

Before proceeding ahead and for completeness, 
we register the analogy 
between the higher-order and nonlocal frameworks.
To a lesser but far from negligible extent,
nonlocal theories
share the features of
being well-established~\cite{Barnaby:2007ve,Heredia:2021wja,Heredia:2022mls},
possessing multi-disciplinary scope~\cite{Deser:2007jk,Biswas:2010zk,Calcagni:2013eua,Calcagni:2023goc}
and deserving persistent attention~\cite{Deffayet:2024ciu}.
A prominent exemplar is
the Wess-Zumino-Witten model~\cite{Wess:1971yu,Novikov:1982ei,Witten:1983ar}.
Nonlocal theories tend to be associated with
a Hamiltonian that is not bounded from below,
which again does not imply
their dynamical instability.
The scrutiny of nonlocal theories
calls for tailored techniques
beyond their loose understanding
as infinite-order settings,
see~\cite{Gomis:2000sp,Heredia:2023cgs}
and references therein.
Hence,
we shall not expressly consider them.
Notwithstanding,
the essence of our account 
and conclusions apply therein.

\paragraph{Organization.}
The paper is arranged as follows.
In section \ref{sec:review},
we classify the notions of classical stability
that have been put forward
for ghost systems.
Additionally,
we provide a succinct yet all-encompassing
overview of the literature.
In section \ref{sec:const},
we formalize
the global subset of the said classical stability notions.
With a focus on one such formalization,
in section \ref{sec:Qs} we expose
the mathematical foundation of known stable ghost systems,
and develop a framework for novel settings,
all of which are exemplified in the subsequent section \ref{sec:ex}.
We summarize and draw our conclusions in the final section \ref{sec:concl}.
Tangential technicalities are consigned to
appendices \ref{app:thenergy}-\ref{app:properties}.

\section{State of the art: multifacetedness of ghost stability}
\label{sec:review}

Granted that ghost systems can be stable, 
the question arises as to what is meant by their stability.
Indeed,
beyond the misreckoning
of unbounded Hamiltonians
as necessarily yielding unstable dynamics,
we identify a secondary hindrance
in the multiplicity
of stability notions currently in use
for such systems.
Accordingly,
we proceed to their categorization.
We restrict attention to the classical realm.

\paragraph{Global notions of stability.}
These apply to all solutions, irrespective of initial conditions.
\begin{enumerate}[label=\textcolor{blue}{G\arabic*.}, ref=G\arabic*]
\item
\label{def:G1}
Lagrange stability~\cite{Lag,LaSalle}.
Solutions are bounded.
\item
\label{def:G2}
Benign ghost~\cite{Smilga:2013vba,Robert:2006nj}. 
Solutions are either bounded
or diverge only as time tends to infinity (runaway).
This excludes solutions
that diverge within finite time (blow ups).
\end{enumerate}

\paragraph{Local notions of stability.}
These apply to some solutions,
characterized by initial conditions.
\begin{enumerate}[label=\textcolor{blue}{L\arabic*.}, ref=L\arabic*]
\item
\label{def:L1}
Island of stability~\cite{Smilga:2004cy,Smilga:2005gb}.
Local counterpart to either of the above global notions.
\item
\label{def:L2}
Lyapunov stability~\cite{Lya} and generalizations thereof. 
Notions of stability in the immediate vicinity of
a critical point of the system.
\end{enumerate}

By definition,
\ref{def:G1} $\subset$ \ref{def:G2}.
We remark that,
given a dynamical system
with free parameters and/or functions,
any fixed notion of stability
is typically attained only
within a certain parametric and/or functional region.

\subsection{Main classical results}
\label{sec:reviewclass}

For intelligibility and contextualization,
we here provide
a necessarily abridged recount of the cardinal outcomes
within each stability notion.
As a preliminary observation,
we note that
all notions
originate and are employed 
beyond our targeted ghost systems.

\ref{def:G1} studies flourished
motivated by engineering applications
wherein Lyapunov stability \ref{def:L2}
is overrestrictive~\cite{Gyfto}. 
In particular,
\ref{def:G1} stability is a befitting property
of dynamical systems intended to model
situations where variables need not be kept within a stringent tolerance
for an acceptable performance.
As regards ghost systems,
\ref{def:G1} is a prolific
entry in our stability dictionary.
Kaparulin and coauthors have developed
a vast bodywork,
with onset~\cite{Kaparulin:2014vpa}
and culmination~\cite{Abakumova:2018eck},
encompassing~\cite{Kaparulin:2015owa,Kaparulin:2015uxa,Kaparulin:2015pda,Abakumova:2017syd,Kaparulin:2017swa,Abakumova:2017uto,Kaparulin:2017uar,Kaparulin:2018npv,Abakumova:2019ifi,Kaparulin:2019njc,Abakumova:2019wpn,Abakumova:2019dov,Kaparulin:2020gqn}.
These works prove \ref{def:G1} stability
in numerous mechanical and field theoretical settings
of patent physical relevance and complexity.
Central to
the clearly prescribed yet not fully systematized
construction of such theories
is their bi-Hamiltonian character.
(Bi-Hamiltonian systems were pioneered
by Magri~\cite{Magri:1977gn}.
A cogent introduction can be found in chapter 7.3 of ~\cite{Olver}.)
In broad strokes, 
the expounded frameworks possess
a ghost-like Hamiltonian,
but are also associated with a non-canonical Hamiltonian
that is bounded from below.
The latter warrants \ref{def:G1} stability.
The method requires
such a compendium of attributes
from a root system
and proves its preservation 
in non-trivially extended scenarios.
For an articulate summary,
we recommend the introduction in~\cite{Kaparulin:2020rqz}.
On the other hand,
Deffayet et al.~\cite{Deffayet:2023wdg,Deffayet:2021nnt} have shown \ref{def:G1} stability
for a broad class of mechanical integrable systems.
Along these lines,
Heredia and Llosa
have incorporated further
examples in their second version of~\cite{Heredia:2024wbu}.

In certain settings,
\ref{def:G2} stability can be assimilated to the completeness
of the Hamiltonian vector field
--- see the discussion below definition \ref{def:G2formal}
in the subsequent section \ref{sec:const}.
In this sense,
\ref{def:G2} is a classic subject of study in differential geometry, e.g.~\cite{HopfRinow,Palais:1957}.
Applied to ghost systems,
\ref{def:G2} stability has been
the goal of significant sundry efforts
by Smilga and collaborators.
Prefatory studies~\cite{Smilga:2004cy,Smilga:2005gb,Ivanov:2005qf,Smilga:2005pr,Smilga:2006tu,Smilga:2006ax}
ultimately led to \ref{def:G2} stable supersymmetric models:
mechanical at the outset~\cite{Robert:2006nj}
and field theoretical later~\cite{Smilga:2013vba}.
A lucid chronicle is~\cite{Smilga:2017arl},
while
forefront results
are conveyed in~\cite{Damour:2021fva}.
Therein,
\ref{def:G2} stable systems
are systematically obtained
from non-ghost \ref{def:G1} stable systems,
without appeal to supersymmetry.
The construction 
incorporates~\cite{Robert:2006nj,Smilga:2013vba}
as subcases.

(The kind of) Global stability in some systems
awaits future advancements.
In pursuit of \ref{def:G2} stability,
Smilga suggested
a proliferation of examples~\cite{Smilga:2020elp},
including the renowned Toda chain~\cite{Toda}
and the Korteweg-de Vries (KdV) system~\cite{Korteweg:1895lrm},
with time and space interchanged.
Both Toda and especially KdV
are epitomic instances of bi-Hamiltonian systems.
In spite of ulterior progress~\cite{Fring:2024brg,Damour:2021fva,Smilga:2021nnx}
in relation to time-space flipped
KdV and certain modifications thereof,
\ref{def:G1} and/or \ref{def:G2} stability 
remain inconclusive.

As local correlative to \ref{def:G1},
ghostly \ref{def:L1} stability counts two apogees.
First,
in an ambitious generalization
of their frame,
Kaparulin-Lyakhovic-Nosyrev~\cite{Kaparulin:2020rqz}
undertake the stabilization of
fully unbounded from below bi-Hamiltonian root systems
via interacting augmentations.
As a result,
they prove \ref{def:L1} stability
starting with two such root systems:
the resonant Pais-Uhlenbeck
oscillator
and zero mass Podolsky electrodynamics.
This scheme does not generalize to \ref{def:G1} stability,
at least not straightforwardly.
Second,
following the thread of~\cite{Damour:2021fva,Smilga:2020elp,Smilga:2021nnx},
Fring-Taira-Turner~\cite{Fring:2024brg}
have recently proven
the oscillatory behavior of
a number of modified and time-space swapped KdV systems,
for specific initial conditions.
We already indicated that \ref{def:G1} stability
persists an enticing open question.
Complementarily,
as local incarnation of \ref{def:G2} stability,
\ref{def:L1} stability
has made an appearance
in a variety of cosmological contexts
benefitting from ghost states,
e.g.~\cite{Linde:1988ws,Carroll:2003st,Kaplan:2005rr,Salvio:2019ewf,Gross:2020tph}.
We single out~\cite{Smilga:2004cy,Smilga:2005gb}
as decisive for our present discussion,
since they triggered
the very notion of \ref{def:G2} stability.

Lyapunov stability \ref{def:L2} and refinements thereof,
particularly its asymptotic variant,
are vertebral to dynamical systems.
Within scope,
we have been able to trace back \ref{def:L2}
studies in ghost systems to~\cite{Pagani:1987ue}.
The latest contribution~\cite{Heredia:2024wbu}
includes proofs of \ref{def:L2} stability for
nonlocal mechanical systems.

Numerical studies~\cite{Pavsic:2012pw,Pavsic:2013noa,Pavsic:2013mja,Pavsic:2016ykq,Fring:2023pww}
have been instrumental
to the development of the above reviewed
analytic results
and continue to motivate enticing
questions~\cite{Fring:2024xhd}.

\subsection{Comments on quantization}
\label{sec:reviewquant}

Although beyond scope,
for completeness
we here provide a lightning recap
on quantum ghost settings.

Sustained consensus 
identifies
unitarity as the paramount feature
for quantum viability
and the existence of a well-defined ground state
as the embodiment of quantum stability.
A priori in clash,
the canonical quantization of generic ghost settings
involves a dichotomy:
negative norm states or
a spectrum that is not bounded from below~\cite{Fring:2024brg,Smilga:2013vba,Smilga:2017arl,Robert:2006nj,Barth:1983hb,Weldon:2003by,Smilga:2008pr,Ilhan:2013xe,Woodard:2015zca,Salvio:2015gsi}.
Early works,
such as Stelle's influential proof of renormalizability
for fourth-order quantum gravity~\cite{Stelle:1976gc}
or its reconsideration in~\cite{Barth:1983hb},
tend to focus on the reconciliation
of unitarity and negative norm states.
Recently, 
the study of negative eigenvalues
has gathered attention,
as emphatically defended in~\cite{Woodard:2015zca}.

In this regard,
Smilga~\cite{Smilga:2017arl,Smilga:2008pr},
also with Robert~\cite{Robert:2006nj},
surmounts the paradox
in certain cases,
proving unitarity of the evolution operator
while minutely characterizing
the unorthodox, unbounded from below spectrum.
Their analyses heavily rely on
solvability,
direct or by means of
Liouville integrability
via a non-trivial change of coordinates to action-angle variables.
With a solid basis on
the Pais-Uhlenbeck oscillator~\cite{Smilga:2008pr}
and their propounded
\ref{def:G2} stable mechanical system~\cite{Robert:2006nj},
a remarkable conjecture is put forward:
global stability of a ghost system
ensures its quantum viability.
More precisely,
both \ref{def:G1} and \ref{def:G2} stability
warrant quantum unitarity,
with pure point and band-wise continuous dense spectra,
respectively.

The exception to the rule are~\cite{Salvio:2015gsi}
and the very recent~\cite{Holdom:2024onr},
where negative norm states are shown to be
compatible with unitarity.
In~\cite{Salvio:2015gsi},
the approach is mechanical in nature.
It calls for
coordinate operators that are
odd under time reflection 
and also for a self-adjoint yet not Hermitian Hamiltonian.
See also~\cite{Strumia:2017dvt}.
In~\cite{Holdom:2024onr},
a new inner product is proposed
which, used in the Born rule,
provides a sensible probability interpretation
--- in particular, respecting unitarity.
This is worked out in detail in mechanical scenarios
and toy-models that support its validity in four dimensional field theory.
The focus is on classically unstable systems,
which aligns well with Smilga's conjecture.
In this regard, 
see~\cite{Romatschke:2024mxr} and references therein.
Complementarily,
we cannot omit the even more recent~\cite{Gies:2024dly},
wherein specific higher-order quantum field theories
are found to be stable at first loop order,
with the presumable tachyonic mass pole absent.
For greater insights into an allied setting,
see~\cite{Gies:2023cnd}.

Notwithstanding,
most efforts center around
the outright cancellation of the dichotomy.
We highlight three such possibilities.
By construction,
the bi-Hamiltonian theories
considered by Kaparulin and friends
admit canonical quantization
of a non-canonical Hamiltonian bounded from below,
with respect to appropriate
Poisson brackets.
In this manner,
positive norm states and
a ground state
are obtained.
The technique was
developed in~\cite{Kazinski:2005eb,Lyakhovich:2005mk,Lyakhovich:2006sc}.

Besides,
it has been suggested that
the Pais-Uhlenbeck oscillator be viewed as
invariant under Parity and Time reflection (PT) 
and the apposite non-canonical quantization scheme applied~\cite{Bender:2007wu,Bender:2008gh,Bender:2008vh}.
(An instructive recapitulation
of PT quantum mechanics
appeared not long ago~\cite{Bender:2023cem}.)
Yet again,
the result is
positive norm states and a ground state.
On the not so bright side,
PT symmetry is obtained through
a rather unnatural analytic continuation
and the proposal does not straightforwardly extend to
the non-trivial interacting scenario~\cite{Smilga:2008pr}.
To the best of our knowledge, 
such criticisms have never been addressed.

Lastly,
a popular circumvention
consists in the development of
ghost confinement mechanisms, e.g.~\cite{Kawasaki:1981gk,Arkani-Hamed:2003pdi,Mukohyama:2009rk,Donoghue:2017fvm,Frasca:2022gdz,Liu:2022gun}.
These techniques remove
the modes responsible for
the bottom unboundedness of the Hamiltonian
from the physical spectrum,
including the classical regime.

\section{Definitions: the big picture of global stability}
\label{sec:const}

Additional progress in the above reviewed subject
of stable ghost systems necessitates,
or at least greatly benefits from,
a rigorous comprehension of the various stability notions at play.
Therefore,
this section is devoted to the formalization
of the comparatively under-examined
global stability notions \ref{def:G1} and \ref{def:G2}.
Local stability \ref{def:L1}
is then recovered for particular sets
of solutions, characterized by initial conditions.
Our work thus complements
the long-standing formalization
of Lyapunov stability \ref{def:L2}
and plentiful generalizations thereof.
We underscore the generality of the forthcoming discussion,
which shall be explicitly applied to (certain) ghost systems
in section \ref{sec:ex}.
The domain and interrelation
of our proposed definitions
are succinctly conveyed in figure~\ref{fig:G1s}.

We set our scope to dynamical systems.
We require their generic characterization,
independent of a Lagrangian
and/or Hamiltonian.
The rationale is three-fold.
First, 
this entails the same level of abstraction
employed by Lyapunov.
Second,
numerous dynamical systems exist
for which a standard Lagrangian
and/or Hamiltonian cannot be defined.
Well-known examples include
the Lorenz system~\cite{Lorenz:1963yb}
and nonlocal actions~\cite{Heredia:2024wbu}.
Third,
we thus remove potential ambiguities
that may arise 
in bi-Hamiltonian systems~\cite{Smilga:2008pr},
such as the extensively researched
Pais-Uhlenbeck oscillator~\cite{PU50}.
Accordingly,
we consider
the set of solutions:
\begin{align}
\label{eq:soldef}
\textrm{Sol}=\{\gamma: I_\gamma\subset\mathbb{R}\rightarrow \mathcal{P} \;\textrm{ continuous} \},
\end{align}
for some topological space $\mathcal{P}$;
henceforth taken to be at least Hausdorff.
Here,
$I_\gamma$ is a fixed (non-empty) interval of time
where the curve $\gamma$ is defined.
For example,
given a Lagrangian system, 
$\mathcal{P}$ is the tangent space of the configuration space, 
$\textrm{Sol}$ are the solutions of the Euler-Lagrange equations and
$I_\gamma$ is their maximal interval of definition.

In this broad setting,
the notion \ref{def:G1}
admits two formalizations,
depending on $\mathcal{P}$.
The definition
closest to the understanding employed in the ghost literature is:

\setcounter{section}{1}

\begin{definitionTwice}
\label{def:G1bounded}
For $\mathcal{P}$ metric,
a system is \textbf{G1 in the bounded sense} if,
for all curves $\gamma\in \emph{Sol}$,
their image $\gamma(I_\gamma)$
is bounded.
\end{definitionTwice}

\noindent
Notice that the metric nature of $\mathcal{P}$ 
allows for a notion of distance,
which in turn renders boundedness meaningful.
We here propose 
the broader notion:

\begin{definitionTwice}
\label{def:G1compact}
For $\mathcal{P}$ Hausdorff,
a system is \textbf{G1 in the compact sense} if,
for all curves $\gamma\in \emph{Sol}$,
their image $\gamma(I_\gamma)$
is relatively compact with respect to $\mathcal{P}$.
\end{definitionTwice}

\setcounter{section}{3}

Recall that
a subset $U$ of a topological space $\mathcal{P}$ is
relatively compact with respect to $\mathcal{P}$ if
its closure $\overline{U}$ is compact.
Moreover,
for $\mathcal{P}$ Hausdorff,
$U$ is relatively compact with respect to $\mathcal{P}$ iff it is inside a compact set.
(The definition of Hausdorff, relatively compact and allied concepts and properties
can be found in e.g.~\cite{Schechter},
see 17.15.)

Two important observations follow.
First, all metric spaces are Hausdorff,
but the converse is not true,
see e.g.~11.5 in~\cite{SteenSeebach}.
This implies that the scope of definition \ref{def:G1compact}
is larger than that of definition \ref{def:G1bounded}.
Second, for $\mathcal{P}$ metric (and hence Hausdorff),
definitions \ref{def:G1bounded} and \ref{def:G1compact}
are not equivalent.
In this case,
definition \ref{def:G1compact} implies definition \ref{def:G1bounded},
but the converse is not true.
Equivalence is achieved by requiring
the Heine-Borel property for $\mathcal{P}$~\cite{WJ:1987},
which states that
any closed and bounded subset of $\mathcal{P}$ is compact. 
For a counterexample,
the renowned Riesz's lemma shows that the closed unit ball of infinite-dimensional
normed vector spaces is not compact
--- see e.g.~5.2 in~\cite{Lax}.

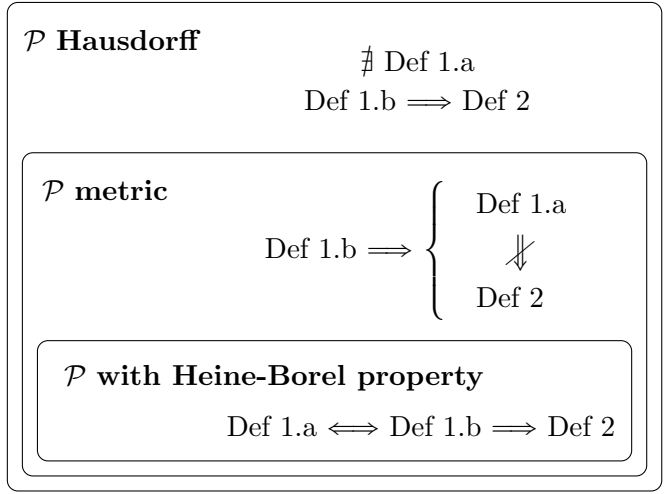
\begin{figure}
\begin{center}
\begin{tikzpicture}
\draw[rounded corners] (-6, 0) rectangle (2.7, 6.5) {};
\node at (-4.6,6){$\mathbf{\mathcal{P}}$ \textbf{Hausdorff}};
\node at (-0.55,5.2){Def $1$.b $\Longrightarrow$ Def $2$};
\node at (-0.55,5.7){$\nexists$ Def $1$.a};
\draw[rounded corners] (-5.8, 0.2) rectangle (2.5, 4.5) {};
\node at (-4.7,4){$\mathbf{\mathcal{P}}$ \textbf{metric}};
\node at (-0.55,3.2){Def $1$.b $\Longrightarrow \begin{cases}
&\text{Def }1\text{.a}
\\
&\quad\rotatebox[origin=c]{-30}{\big\slash}\hspace*{-0.35cm}{\big\Downarrow}
\\
&\text{Def }2
\end{cases}$
};
\draw[rounded corners] (-5.6, 0.4) rectangle (2.3, 2) {};
\node at (-2.45,1.5){$\mathbf{\mathcal{P}}$ \textbf{with Heine-Borel property}};
\node at (-0.5,0.87){Def $1$.a $\Longleftrightarrow$ Def $1$.b $\Longrightarrow$ Def $2$};
\end{tikzpicture}
\caption{Visualization of
the scope of 
global stability
definitions \ref{def:G1bounded}, \ref{def:G1compact}
and \ref{def:G2formal},
together with relations between them.
These formalize the preexisting notions \ref{def:G1} and \ref{def:G2}.
}
\label{fig:G1s}
\end{center}
\end{figure}

Beyond its greater scope,
section \ref{sec:Qs}
(particularly our main result theorem \ref{th:conf})
unequivocally leads to our prevailing 
of definition \ref{def:G1compact} over definition \ref{def:G1bounded}.
More generally,
it is essential to keep watch on
metric $\mathcal{P}$'s without the Heine-Borel property,
since they are physically relevant.
Conspicuous instances include 
infinite-dimensional Banach and Hilbert spaces
(as particular exemplars of infinite-dimensional normed vector spaces)
and geodesically incomplete Riemannian manifolds;
for example, $\mathbb{R}^n\setminus\{0\}$ with the Euclidean metric.
In all such scenarios,
it becomes indispensable
to pick out between definitions
\ref{def:G1bounded} and \ref{def:G1compact}.

Moving on to
the stability notion \ref{def:G2},
we observe that this is a weaker version of \ref{def:G1}.
Specifically,
\ref{def:G2} accepts solutions that diverge at infinite time.
For any such $\gamma \in \textrm{Sol}$,
the image $\gamma(I_\gamma)$
is not inside a compact set.
Nevertheless,
\ref{def:G2} excludes solutions
that diverge at a finite time.
In order to detect these $\gamma$'s
and banish them from the formal definition of \ref{def:G2},
we follow the work of Palais~\cite{Palais:1957}
and probe spans of time
that are relatively compact with respect to $\mathbb{R}$,
instead of probing the whole time range.
Accordingly,
we put forward:

\setcounter{definition}{1}

\begin{definition}
\label{def:G2formal}
For $\mathcal{P}$ Hausdorff,
a system is \textbf{G2} if,
for all curves $\gamma\in \emph{Sol}$,
and for all subsets $S\subset I_\gamma$
relatively compact with respect to $\mathbb{R}$,
the image $\gamma(S)$ is relatively compact
with respect to $\mathcal{P}$.
\end{definition}

If Sol is the set of maximal integral curves of a smooth\footnote{Classical texts
consider $X$ smooth.
For the purists,
we note it is enough to require that $X$ is ${C}^1$.} vector field $X$,
then by negation of the Escape Lemma (see, e.g.~\cite{Lee}),
the system is G2 iff $X$ is complete.
This is the case of regular Lagrangian and/or Hamiltonian systems.
Bear in mind that a vector field is complete
if its integral curves can be defined for all times:
$I_\gamma=\mathbb{R}$ in (\ref{eq:soldef}). 

It is obvious that definition \ref{def:G1compact}
implies definition \ref{def:G2formal},
but the converse is not true.
Moreover,
for $\mathcal{P}$ metric,
definition \ref{def:G1bounded}
does not imply \ref{def:G2formal}.

All our proposed definitions
\ref{def:G1bounded}, \ref{def:G1compact} and \ref{def:G2formal}
are manifestly intrinsic: 
they do not appeal to coordinates.

\section{Proving G1: the pivotal role of conserved quantities}
\label{sec:Qs}

Henceforth,
we concentrate on global stability G1,
in the compact sense of definition \ref{def:G1compact}.
In contraposition to G2 global stability
in the formal sense of definition \ref{def:G2formal},
our prevailed conceptualization a priori prevents the divergence
of wave functions and probability distributions
upon quantization,
which in principle span the whole configuration space.
This argument was also presented in~\cite{Ilhan:2013xe,Deffayet:2023wdg}
and is reinforced by
calculations~\cite{Robert:2006nj,Smilga:2017arl,Smilga:2008pr}
that signal a comparatively less unconventional
quantum structure.
In contraposition to G1 global stability
in the bounded sense of definition \ref{def:G1bounded},
our prevailed conceptualization
enjoys a broader scope in its definition
(all Hausdorff spaces)
and always implies G2 global stability.

In this section,
we investigate
a suitable tool kit for the implementation 
of definition \ref{def:G1compact}
in systems of the form (\ref{eq:soldef}).
We begin by briefly expounding the mathematical results
that support such G1 stability proofs
and then proceed to their unprecedented generalization.
To this aim,
the following comprehension of 
a conserved quantity
is employed hereafter.

\begin{definition}
\label{def:consquant}
Let $N$ be a topological space.
A continuous function $Q:\mathcal{P}\rightarrow N$
is a \textbf{conserved quantity} of {\emph{Sol}} if,
for any fixed curve $\gamma\in\emph{Sol}$,
the value of $Q$ is unchanged; that is, $Q(\gamma(t_1))=Q(\gamma(t_2))$ for all points $t_1,t_2\in I_\gamma$.
\end{definition}

Observe that our reference definition \ref{def:G1compact}
is topological in nature.
Befittingly,
definition \ref{def:consquant}
brings the concept of conserved quantity
to the same level of abstraction.

\subsection{When does energy work?}
\label{sec:energy}

A well-known resolution to the interrogation is as follows.

\begin{theorem}
\label{th:energy}
Given a system $\emph{Sol}$
with $\mathcal{P}=\mathbb{R}^n$,
if the energy $E:\mathbb{R}^n\rightarrow \mathbb{R}$
is a
twice differentiable
conserved quantity
that
has a minimum
and whose Hessian is positive definite everywhere,
then the system is \emph{G1}.
\end{theorem}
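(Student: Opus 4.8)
The plan is to exploit the conservation of $E$ to trap every solution inside a single sublevel set of $E$, and then to show that strict convexity together with the existence of a minimum forces all such sublevel sets to be bounded. Since $\mathbb{R}^n$ enjoys the Heine-Borel property, boundedness coincides with relative compactness, which is exactly what definition \ref{def:G1compact} demands; so this will establish G1.

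First I would record the two structural consequences of the hypotheses. A positive definite Hessian everywhere makes $E$ strictly convex on $\mathbb{R}^n$, via the second-order Taylor identity $E(y)-E(x)-\nabla E(x)\cdot(y-x)=\tfrac12\,(y-x)^\top \Hessian E(\xi)\,(y-x)>0$ for $y\neq x$. Combined with the assumption that $E$ attains a minimum, this pins down a \emph{unique} global minimiser $x_0$, which is a critical point $\nabla E(x_0)=0$ and satisfies the strict inequality $E(y)>E(x_0)$ for every $y\neq x_0$. Next, by definition \ref{def:consquant}, $E$ is constant along each $\gamma\in\emph{Sol}$; fixing $t_0\in I_\gamma$ and setting $c:=E(\gamma(t_0))$, conservation gives $\gamma(I_\gamma)\subseteq E^{-1}(c)\subseteq K_c:=\{x:E(x)\le c\}$, a set that is closed by continuity of $E$. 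It therefore suffices to prove that each $K_c$ is bounded.

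The core of the argument is a coercivity estimate obtained by testing $E$ on the unit sphere about $x_0$. Because $\{\,\|y-x_0\|=1\,\}$ is compact and $E(y)-E(x_0)$ is continuous and, by the strict uniqueness above, strictly positive there, the number $\mu:=\min_{\|y-x_0\|=1}\big(E(y)-E(x_0)\big)$ is strictly positive. For any $x$ with $R:=\|x-x_0\|\ge 1$, the point $y:=(1-R^{-1})x_0+R^{-1}x$ lies on that sphere, so convexity gives $E(y)\le(1-R^{-1})E(x_0)+R^{-1}E(x)$; rearranging yields $E(x)\ge E(x_0)+R\,(E(y)-E(x_0))\ge E(x_0)+\mu\,\|x-x_0\|$. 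Hence $E$ grows at least linearly away from $x_0$, so $K_c$ is contained in the ball of radius $\max\{1,(c-E(x_0))/\mu\}$ about $x_0$ and is bounded. Consequently $\gamma(I_\gamma)$ sits inside the compact set $\overline{K_c}=K_c$, so it is relatively compact and the system is G1.

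The step I expect to be the real obstacle is the passage from ``$E$ increases along every outward ray from $x_0$'' to genuine boundedness of the sublevel sets: direction-by-direction growth is not enough, since the growth rate could degrade to zero as the direction varies (as it may when the smallest eigenvalue of the Hessian decays at infinity, so that no global lower Hessian bound is available). The device that rescues the argument is to extract a \emph{single} uniform slope $\mu>0$ from compactness of the unit sphere and then transport it to all radii through the chord inequality, rather than by integrating the Hessian. It is worth flagging that positive definiteness enters only to guarantee that $x_0$ is a strict, unique minimiser so that $\mu>0$; mere convexity with a non-strict minimum would permit a flat direction reaching the sphere, $\mu=0$, and unbounded sublevel sets.
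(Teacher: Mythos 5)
Your proof is correct, and its skeleton coincides with the paper's (appendix \ref{app:thenergy}): conservation traps $\gamma(I_\gamma)$ inside a (sub)level set of $E$; compactness of the unit sphere about the minimiser $x_0$ extracts a uniform positive gap ($\mu$ in your notation, $f_{\ast}-f(x_0)$ in the paper's lemma \ref{lem:defposcoercive}); that gap is propagated to all radii to give at-least-linear growth of $E$; and the Heine--Borel property of $\mathbb{R}^n$ converts boundedness into relative compactness, so G1 holds in the equivalent senses of definitions \ref{def:G1bounded} and \ref{def:G1compact}. Where you genuinely differ is in the middle step and the packaging. The paper factors the argument through the auxiliary notion of a positively coercive function (definition \ref{def:rescoer}): lemma \ref{lem:defposcoercive} establishes coercivity via the mean value theorem plus a second Taylor expansion around an intermediate point, with the explicit radius ansatz (\ref{eq:ansatzRd}), and lemma \ref{lem:bounded} then shows coercive functions have bounded, relatively compact level sets. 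You instead invoke strict convexity directly and obtain the linear bound $E(x)\ge E(x_0)+\mu\,\|x-x_0\|$ in one line from the chord inequality $E\bigl((1-R^{-1})x_0+R^{-1}x\bigr)\le(1-R^{-1})E(x_0)+R^{-1}E(x)$, bounding the closed sublevel set $K_c$ outright; the paper's route uses the first-order (supporting-line) consequence of the positive definite Hessian, yours the zeroth-order (secant) one, so they are two faces of the same convexity fact, with yours the more elementary and self-contained execution. What the paper's heavier factorization buys is reusability: the coercivity lemmas are precisely what is recycled to prove theorem \ref{th:coercive}, of which theorem \ref{th:energy} then becomes a corollary, and they feed the later generalization to confining functions via (\ref{eq:impl1})--(\ref{eq:impl2}). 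Your closing diagnosis is also on target: the uniform slope from sphere compactness is exactly the device that rules out directionwise growth degrading at infinity, which is why no global lower eigenvalue bound on the Hessian is needed.
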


Since $\mathbb{R}^n$ is metric and has the Heine-Borel property,
the system is G1 in the sense of
both equivalent definitions \ref{def:G1bounded} and \ref{def:G1compact}.
The converse of theorem \ref{th:energy} is not true,
as abundantly exemplified in section~\ref{sec:review}.

For completeness,
a recount of the proof
is provided in appendix \ref{app:thenergy}.
(Alternatively,
theorem \ref{th:energy} can be viewed
as a corollary of theorem \ref{th:coercive},
also proved in appendix \ref{app:thenergy}.)
It is beneficial to bear in mind that the proof relies on two facts.
First, solutions $\gamma\in\textrm{Sol}$
are inside level sets of the energy $E$
because the latter is a conserved quantity.
Second,
the level sets of $E$ are bounded,
as a direct consequence of $E$ having both a minimum
and a positive definite Hessian everywhere
--- the very definition of which necessitates the
auxiliary property that $E$ is twice differentiable. 
To avoid confusion stemming from polysemy,
we note that we abide by the understanding of
level set of a function $f$
as the set of points in its domain
for which $f$ takes a fixed constant value.
Expressly,
for $f:M\rightarrow N$ and $c\in N$,
the level set is 
\begin{align}
\label{eq:levelset}
f^{\neg}(c):=\{x\in M \, | \, f(x)=c\}.
\end{align}

Notice that theorem \ref{th:energy}
does not ensure G1 stability of the system $\textrm{Sol}$
if either of the two requirements
on the conserved, twice differentiable energy $E$
is not met.
Namely, 
if either $E$ has a minimum but its Hessian is not positive definite somewhere
or
if $E$ does not attain a minimum
in spite of having a positive definite Hessian everywhere,
then the axioms of theorem \ref{th:energy} do not apply.
Respective counterexamples are the Hamiltonian systems on $\mathbb{R}^2$
\begin{align}
\label{eq:Hcounterth1}
H=\frac{1}{2}(p-x)^2,
\qquad 
H=\textrm{exp}(x^2+p),
\end{align}
which can be easily verified to each admit
a solution that is not relatively compact with respect to $\mathbb{R}^2$:
\begin{align}
\label{eq:solcounterth1}
\begin{cases}
x=t,\\
p=t+1,
\end{cases}
\qquad
\begin{cases}
x=t,\\
p=-t^2.
\end{cases}
\end{align}
We remark that (\ref{eq:Hcounterth1})
illustrate the fact that (in)stability
is a feature of generic dynamical systems,
including but transcending our targeted ghost systems.
(As quantum correlative,
it is interesting to mention
that unitarity loss can occur
in ghost-free settings~\cite{Asorey:2018wot}.)

\subsection{The case of ghost systems}
\label{sec:coercive}

An enticing overspill of the hypotheses of theorem \ref{th:energy}
is provided by (mechanical) ghost systems,
whose imprint is an energy that is not bounded from below.
Close inspection of the apposite proof in appendix \ref{app:thenergy}
readily reveals that any conserved quantity $Q$,
not necessarily the energy $E$,
can be employed in the theorem.
This cognizance underlies
the great majority of G1 stable ghost systems
known to date,
developed
by Kaparulin et al.~\cite{Kaparulin:2015owa,Abakumova:2018eck,Kaparulin:2014vpa,Kaparulin:2015uxa,Kaparulin:2015pda,Abakumova:2017syd,Kaparulin:2017swa,Abakumova:2017uto,Kaparulin:2017uar,Kaparulin:2018npv,Abakumova:2019ifi,Kaparulin:2019njc,Abakumova:2019wpn,Abakumova:2019dov,Kaparulin:2020gqn,Kaparulin:2020rqz}\footnote{Specifically and as already heeded,
these works avail themselves of a $Q$
that is another, inequivalent Hamiltonian of the system.}.

The remaining results,
those of Deffayet and friends~\cite{Deffayet:2023wdg,Deffayet:2021nnt}
and more recently by Heredia and Llosa~\cite{Heredia:2024wbu},
consider a more general kind of conserved quantity:
coercive instead of positive definite.
For self-containment,
we recall:

\begin{definition}
\label{def:limcoerc}
A function $f:\mathbb{R}^n\rightarrow \mathbb{R}$
is \textbf{coercive}
if the absolute value of $f(x)$ diverges as the modulus of $x$ diverges:
\begin{align}
\label{eq:limcoerc}
\lim_{\|x\|\rightarrow +\infty}|f(x)|\rightarrow +\infty\,.
\end{align}
\end{definition}

(En passant,
we point out that coercive functions
can in fact be defined for arbitrary metric spaces,
see lemma \ref{lem:bounded} in appendix~\ref{app:thenergy}.
Upon demanding
the Heine-Borel property,
coercive functions are identified with proper functions.
The double implication is established
in appendix \ref{app:proper},
and its relevance is discussed.)
Using the above,
we proceed to formalize the theorem implicitly in force
in~\cite{Deffayet:2023wdg,Deffayet:2021nnt},
which is a restatement of theorem 5 in
the second version of~\cite{Heredia:2024wbu}.

\begin{theorem}
\label{th:coercive}
Given a system $\emph{Sol}$
with $\mathcal{P}=\mathbb{R}^n$,
if there exists a conserved quantity
$Q:\mathbb{R}^n\rightarrow \mathbb{R}$
that is coercive,
then the system is \emph{G1}.
\end{theorem}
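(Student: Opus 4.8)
The plan is to exploit the single structural fact that a solution cannot leave a level set of a conserved quantity, and then to convert coercivity of $Q$ into boundedness of those level sets. This mirrors the two-step mechanism underlying theorem \ref{th:energy} (solutions lie in level sets; level sets are bounded), but replaces the positive-definite-Hessian hypothesis by the weaker coercivity assumption. First I would fix an arbitrary curve $\gamma\in\textrm{Sol}$. By Definition \ref{def:consquant}, $Q$ is constant along $\gamma$; denoting this constant by $c\in\mathbb{R}$, we have $Q(\gamma(t))=c$ for every $t\in I_\gamma$. In the level-set notation of (\ref{eq:levelset}), this says precisely that $\gamma(I_\gamma)\subset Q^{\neg}(c)$.

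The substantive step, which I expect to be the only real obstacle, is to show that the level set $Q^{\neg}(c)$ is bounded. I would argue by contradiction. Suppose $Q^{\neg}(c)$ is unbounded; then one can choose a sequence $(x_k)\subset Q^{\neg}(c)$ with $\|x_k\|\to+\infty$. Coercivity of $Q$ (Definition \ref{def:limcoerc}) forces $|Q(x_k)|\to+\infty$. But $x_k\in Q^{\neg}(c)$ means $|Q(x_k)|=|c|$ for all $k$, which is a fixed finite number and cannot diverge. This contradiction shows that $Q^{\neg}(c)$ must be bounded. The only care needed here is to read the limit in Definition \ref{def:limcoerc} in its $\varepsilon$--$R$ form: for every $M>0$ there is $R>0$ with $\|x\|>R\Rightarrow|Q(x)|>M$, so taking $M>|c|$ immediately rules out points of $Q^{\neg}(c)$ of arbitrarily large norm.

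Combining the two steps, $\gamma(I_\gamma)$ is contained in the bounded set $Q^{\neg}(c)$ and is therefore bounded. Since $\gamma\in\textrm{Sol}$ was arbitrary, every solution has bounded image, which is exactly G1 in the bounded sense of Definition \ref{def:G1bounded}. Finally, because $\mathcal{P}=\mathbb{R}^n$ enjoys the Heine-Borel property, boundedness and relative compactness coincide, so the image of every solution is also relatively compact and the system is G1 in the compact sense of Definition \ref{def:G1compact} as well. This proves the theorem.
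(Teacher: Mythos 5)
Your proof is correct and takes essentially the same route as the paper's: the paper likewise observes $\gamma(I_\gamma)\subset Q^{\neg}(q)$ and then invokes its lemma \ref{lem:bounded}, whose proof is precisely your ``take $M>|c|$'' argument (stated directly rather than by contradiction), before using the Heine--Borel property of $\mathbb{R}^n$ to upgrade boundedness of the level sets to relative compactness, covering both definitions \ref{def:G1bounded} and \ref{def:G1compact}. Nothing is missing.
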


As in theorem \ref{th:energy} before,
G1 refers to
both equivalent definitions \ref{def:G1bounded} and \ref{def:G1compact}.
The converse of theorem \ref{th:coercive} is not true;
a counterexample being
the renowned Kronecker foliation of the torus
with irrational slope\footnote{
This consists of
curves winding on the torus,
the image of each curve being dense on the torus.
As such,
the closure of each curve
is the whole torus,
which is compact.
Thus,
the system is G1 in the compact sense of definition \ref{def:G1compact}.
However,
the system does not have a non-trivial conserved quantity
in the sense of definition \ref{def:consquant}.
To see this,
recall that a conserved quantity has constant value
on the image of each fixed curve.
As noted,
this is a dense subset on the torus.
For a continuous conserved quantity,
this entails a constant value on the whole torus,
rendering the conserved quantity trivial.
\label{foo:torus}},
see e.g.~19.18 in~\cite{Lee}.

The proof is relegated to appendix \ref{app:thenergy}.
It is worth highlighting
that theorem \ref{th:coercive}
is a corollary of theorem \ref{th:conf} later on,
whose proof is deferred to section \ref{sec:confining}.

A weighty remark is due.
Given a fixed dynamical system,
the unveiling of
conserved quantities is widely recognized
to be a non-trivial task.
This recalcitrant open problem
has recently attracted the efforts of
the machine learning community,
with noteworthy success in specific settings~\cite{Liu:2020omw,Ha:2021},
including the symbolic (i.e.~analytic) characterization of
some of the thusly disclosed conserved quantities~\cite{Liu:2022}.
Furthermore,
given a set of (non-trivial, functionally independent)
conserved quantities for a system,
it is arbitrarily difficult to prove their exhaustiveness.
Arnold-Kozlov-Neishtadt insightfully revisit six approaches
to tackle such an ambitious goal~\cite{AKN}.

In the ghost system literature,
there exist four proposals to surmount
the hardships of conserved quantity revelation.
Smilga has put forward
a systematic variational approach
to generate multi-variable conserved ghost Hamiltonians
starting from a single-variable conserved Hamiltonian, see~\cite{Damour:2021fva} and (self-)references therein.
Additionally,
he suggested~\cite{Smilga:2021nnx}
appropriate modifications
on the Korteweg-de Vries (KdV) system~\cite{Korteweg:1895lrm}
for its ghostly conversion,
later widened in~\cite{Fring:2024brg}.
Kaparulin and coauthors have developed
a methodology to extend their targeted bi-Hamiltonian theories
in a manner that the original conserved quantities
are preserved~\cite{Kaparulin:2015owa,Kaparulin:2014vpa,Kaparulin:2017uar}.
Lastly,
Deffayet et al.
consider integrable ghost systems
resulting from the application of complex canonical transformations
on standard integrable Hamiltonian systems~\cite{Deffayet:2023wdg}.
Certainly,
one may always attempt to pull through direct computation,
especially in propitious settings
and equipped with a simplifying ansatz~\cite{Heredia:2024wbu},
as we do to obtain (\ref{eq:exQ}).

\subsection{Main result: generalization of the method}
\label{sec:confining}

In light of definitions \ref{def:G1compact} and \ref{def:consquant},
critical examination of theorems \ref{th:energy} and \ref{th:coercive}
leads to the following fertile conclusion.
Consider a system (\ref{eq:soldef})
possessing a conserved quantity $Q$.
For such a system to be G1 in the compact sense,
it is sufficient that each path-connected component
of a level set of $Q$ 
is relatively compact with respect to $\mathcal{P}$.

To grasp the meaning and value of our inference,
let us take a moment to examine a purposeful example.
Let $\mathcal{P}=\mathbb{R}^2$
and let $\textrm{Sol}$ be a system of the form (\ref{eq:soldef})
with the conserved quantity
\begin{align}
\label{eq:toyQ}
Q=\sin(u^2+v^2).
\end{align}
The level set
\begin{align}
\label{eq:toyLS}
Q^\neg(0)=
\big\{\, (u,v)\in\mathbb{R}^2 \, | \,
u^2+v^2= k\pi, \,  k\in\mathbb{N}\cup\{0\} \, \big\}
\end{align}
consists of concentric circles in $\mathbb{R}^2$.
Each of these circles
is a path-connected component of $Q^\neg(0)$.
In fact,
for each $c\in[-1,1]$,
the level set $Q^\neg(c)$
is made up of concentric circles.
Considered together,
the set of level sets $\{Q^\neg(c)\}$ covers $\mathbb{R}^2$.
Now,
since all curves $\gamma\in\textrm{Sol}$ are continuous,
the image of each such curve $\gamma(I_\gamma)$
can only be contained
within one of the circles (i.e.~path-connected components) of the
appropriate level set $Q^\neg(c)$.
Even though $Q^\neg(c)$
is not relatively compact with respect to $\mathbb{R}^2$
(it contains points with arbitrarily large modulus),
the contemplated system is G1 in the compact sense
because each path-connected component of $Q^\neg(c)$
is compact
--- thus, relatively compact
--- with respect to $\mathbb{R}^2$.

The above indicated and illustrated
sufficient criterion
for the enforcement of definition \ref{def:G1compact}
on a system $\textrm{Sol}$
motivates our putting forward the subsequent definition.

\begin{definition}
\label{def:conf}
Let $\mathcal{P}$ be a Hausdorff space
and let $\mathcal{S}$ be a set.
A function $f:\mathcal{P}\rightarrow \mathcal{S}$
is \textbf{confining}
if the path-connected components of its level sets
are relatively compact with respect to $\mathcal{P}$.
\end{definition}
To the best of our knowledge, 
the above is a novel concept. 
It generalizes
coercive (proper) functions,
in the appropriate setting,
as shown 
in proposition \ref{prop:coerconf} (\ref{lem:coerconf})
of appendix \ref{app:thenergy} (\ref{app:proper}).
Its interest is justified by
our main result:

\begin{theorem}
\label{th:conf}
Given a system $\emph{Sol}$ with $\mathcal{P}$ Hausdorff,
if there exists a conserved quantity $Q:\mathcal{P}\rightarrow N$
that is confining,
then the system is \emph{G1} in the compact sense.
\end{theorem}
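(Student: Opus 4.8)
The plan is to show that every solution is trapped inside a single path-connected component of one level set of $Q$, and then to transfer relative compactness from that component down to the solution's image. First I would fix an arbitrary curve $\gamma\in\textrm{Sol}$, choose any base point $t_0\in I_\gamma$, and set $c:=Q(\gamma(t_0))\in N$. Since $Q$ is a conserved quantity in the sense of definition \ref{def:consquant}, the composition $Q\circ\gamma$ is constant and equal to $c$ throughout $I_\gamma$; equivalently, using the level-set notation of (\ref{eq:levelset}), $\gamma(I_\gamma)\subseteq Q^{\neg}(c)$.

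The key structural step is to refine this containment to a single path-connected component. Because $I_\gamma$ is an interval of $\mathbb{R}$, it is path-connected, and continuity of $\gamma$ makes $\gamma(I_\gamma)$ a path-connected subset of $\mathcal{P}$, hence a path-connected subset of the level set $Q^{\neg}(c)$ endowed with its subspace topology. Any path-connected subset of a space lies in exactly one of that space's path-connected components: were it to meet two distinct components, a path inside the subset joining a point of each would fuse those components, a contradiction. I would therefore let $C$ denote the unique path-connected component of $Q^{\neg}(c)$ containing $\gamma(I_\gamma)$.

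Next I would invoke the confining hypothesis. By definition \ref{def:conf}, every path-connected component of every level set of $Q$ is relatively compact with respect to $\mathcal{P}$; in particular $C$ is, so its closure $\overline{C}$ taken in $\mathcal{P}$ is compact. Finally I would push relative compactness down to the solution: from $\gamma(I_\gamma)\subseteq C$ it follows that $\overline{\gamma(I_\gamma)}\subseteq\overline{C}$, and since $\overline{\gamma(I_\gamma)}$ is a closed subset of the compact set $\overline{C}$ it is itself compact. Thus $\gamma(I_\gamma)$ is relatively compact with respect to $\mathcal{P}$, and as $\gamma$ was arbitrary this establishes that the system is G1 in the compact sense of definition \ref{def:G1compact}.

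As for the main obstacle, the argument is essentially formal point-set topology, so I expect no genuine difficulty; the one place warranting care is the passage to a single component, where path-connectedness (rather than mere connectedness) is doing the work and where the phrasing of definition \ref{def:conf} is exactly what is needed. One must use that $I_\gamma$ is an interval to guarantee that $\gamma(I_\gamma)$ is path-connected, so that it cannot straddle two path-connected components of $Q^{\neg}(c)$. The inheritance of relative compactness by subsets holds in any topological space, since closed subsets of compact sets are compact, so no hypothesis beyond the Hausdorffness of $\mathcal{P}$ is required for the concluding step.
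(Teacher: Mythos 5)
Your proposal is correct and follows essentially the same route as the paper's proof: trap $\gamma(I_\gamma)$ in a single path-connected component of the level set $Q^{\neg}(Q(\gamma(t_0)))$ via conservation and continuity, then use the confining hypothesis to conclude relative compactness. The only (harmless) variation is at the final step, where you deduce compactness of $\overline{\gamma(I_\gamma)}$ directly as a closed subset of the compact $\overline{A}$, whereas the paper invokes the Hausdorff characterization ``inside a compact $\Rightarrow$ relatively compact''; you also usefully spell out the detail, left implicit in the paper, that path-connectedness of $\gamma(I_\gamma)$ follows from $I_\gamma$ being an interval.
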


Recall that G1 in the compact sense was defined in \ref{def:G1compact}.
Notice that definition \ref{def:consquant}
entails that $N$ in theorem \ref{th:conf}
must be some topological space.
The converse of theorem \ref{th:conf} is not true.
A counterexample is again provided by
the Kronecker foliation of the torus
with irrational slope,
see e.g.~19.18 in~\cite{Lee}
and footnote \ref{foo:torus}.

Our theorem \ref{th:conf} can be demonstrated as follows. 

\begin{proof}
\label{pr:conf}
Consider an arbitrary but fixed curve $\gamma\in \text{Sol}$
and a point $t_0\in I_\gamma$.
Let $q_0\equiv Q(\gamma(t_0))$.
Since $Q$ is a conserved quantity,
for all $t\in I_\gamma$,
the equality $Q(\gamma(t))=q_0$ holds true.
Therefore,
the image of the curve $\gamma$
is contained in a level set of $q_0$:
$\gamma(I_\gamma)\subset Q^{\neg}(q_0)$.
Moreover,
because the curve $\gamma$ is continuous,
it must be inside a path-connected component of $Q^{\neg}(q_0)$.
Let $A$ denote such a component.
Since $Q$ is confining,
$A$ is relatively compact with respect to $\mathcal{P}$.
Hence,
its closure $\overline{A}$ is compact.
Consequently,
$\gamma(I_\gamma)$ is inside a compact set.
As $\mathcal{P}$ is Hausdorff,
$\gamma(I_\gamma)$ is relatively compact.
\end{proof}

We stress that our theorem \ref{th:conf}
provides a generalization to theorem \ref{th:coercive},
which in turn extends theorem \ref{th:energy}.
The concatenation is vertebrated by the following 
doublet of implications.
First,
a function $f:\mathbb{R}^n\rightarrow \mathbb{R}$
that is twice differentiable
satisfies
\begin{align}
\label{eq:impl1}
f \text{ has a minimum and positive definite Hessian everywhere }
\Rightarrow
\,\,
f \text{ has a minimum and is coercive. }
\end{align}
Second,
a function $f:\mathbb{R}^n\rightarrow \mathbb{R}$
fulfills
\begin{align}
\label{eq:impl2}
f \text{ is coercive }
\Rightarrow
\,\,
f \text{ is confining. }
\end{align}
The first (second) implication is proven
in lemma \ref{lem:defposcoercive} (proposition \ref{prop:coerconf})
of the appendix.
The converses are not true.
Ghostly counterexamples
can be found in the next section \ref{sec:ex}:
see (\ref{eq:exth2no1})
and (\ref{eq:exth3no2}),
respectively. 
Also, 
see figure~\ref{fig:plotexs}.

Before advancing to concrete families of ghost systems
that manifest the non-triviality of
the stepwise progression (\ref{eq:impl1})-(\ref{eq:impl2}),
we comment on an often overlooked
yet physically germane relation:
that between the protagonist conserved quantity $Q$
and the symmetries of the system $\textrm{Sol}$.

\subsection{Relation to symmetries}
\label{sec:symmetry}

All methods known to date
that prove G1 stability
(in the sense of definition \ref{def:G1compact})
in a given ghost system
use a conserved quantity.
The symmetries associated with such G1-certifying conserved quantities 
are generically not natural.
This is true
for the examples in~\cite{Deffayet:2023wdg,Heredia:2024wbu,Deffayet:2021nnt},
as well as our examples in the subsequent section \ref{sec:ex}.
Given that unnatural symmetries 
come about sporadically
in the physics literature,
we here provide 
a lightning introduction,
based on the all-encompassing recent book~\cite{Munoz-Lecanda:2024uyr}
on the geometry of mechanical systems.

A dynamical symmetry of a system $\textrm{Sol}$
is defined as
a diffeomorphism $\Psi: \mathcal{P} \rightarrow \mathcal{P}$
mapping solutions to solutions.
The subset of dynamical symmetries
related to an (at least $C^2$) conserved quantity via Noether's theorem
are called Noether symmetries. 
For regular Lagrangian systems,
such as all aforementioned ones,
a Noether symmetry leaves 
both the Lagrangian energy and the symplectic form invariant,
but not necessarily the Lagrangian.

On the other hand,
let $\mathcal{P}$ be the tangent space of the configuration space $\mathcal{P}=T\mathcal{C}$.
A diffeomorphism $\Psi:T\mathcal{C}\rightarrow T\mathcal{C}$ is said to be natural if
it consists of a transformation in positions
that is canonically lifted to velocities:
\begin{align}
\label{eq:unnat}
\Psi(\dot{x}^i,x^i)=
(\dot{x}^{\prime i}
(\dot{x}^i,x^i),x^{\prime i}(\dot{x}^i,x^i)),
\qquad
\textrm{ such that } \quad
x^{\prime i}(\dot{x}^i,x^i)=x^{\prime i}(x^i)
\quad\textrm{ and } \quad
{\dot{x}}^{\prime i}=
\frac{\partial x^{\prime i}}{\partial x^j}\dot{x}^j.
\end{align}
A Noether symmetry with the naturalness property leaves
the Lagrangian invariant (up to a divergence).
The converse is also true:
a natural symmetry that leaves the Lagrangian invariant (up to a divergence)
is a natural Noether symmetry.

It follows that
the Noether symmetries of the G1-enforcing conserved quantity
in ghost systems
typically present two unconventionalities.
First,
they entail a transformation of positions involving velocities.
Second,
they may,
or may not,
leave the Lagrangian invariant (up to a divergence).

The quintessential example of unnatural symmetries is
given by the Laplace-Runge-Lenz vector in the Kepler problem.
(Re)Discovered several times,
it acquired sizeable notoriety in the early days of quantum mechanics,
when Pauli employed it to derive the energy levels of the hydrogen atom~\cite{Pauli:1926qpj}.
Ulterior works showed that the symmetries associated to the Laplace-Runge-Lenz vector,
together with the angular momentum,
form a Lie group acting on the phase space taken as a whole
--- i.e.~considering both positions and momenta.
The interested reader is referred to
the enjoyable text~\cite{Baez} 
and references therein.

\paragraph{Relation to G2 and Liouville(-Mineur)-Arnold theorem.}
\label{par:relationLAT}
Two reminders and a fruitful application follow.
First,
proving that a system $\textrm{Sol}$ is G1
in the sense of definition \ref{def:G1compact} 
proves that the system is G2 in the sense of definition \ref{def:G2formal}
--- recall figure~\ref{fig:G1s}.
Second,
if the system $\textrm{Sol}$ can be described by
an (at least $C^1$) vector field $X$, 
then G2 stability is equivalent to the completeness of $X$
--- summon back the comments below definition \ref{def:G2formal}
and observe this applies to
all examples in~\cite{Deffayet:2023wdg,Heredia:2024wbu,Deffayet:2021nnt}
and section \ref{sec:ex}.
In this case,
G1 implies the completeness of $X$. This direct consequence 
is constructive in the context of integrable systems. 

We already stressed that finding conserved quantities is generically difficult.
Consider nonetheless the auspicious scenario
wherein a dynamical system $\textrm{Sol}$
possesses at least $n=\textrm{dim}(\mathcal{C})$ 
functionally independent conserved quantities $\{Q\}$ in involution.
Such a system is said to be {\it integrable}.
The study of integrable systems
is a decidedly active area of research on its own~\cite{Miranda},
often invoking supplementary non-trivial hypotheses
from the onset.
This has led to a notable ramification within the field
--- for a first impression on the bustle,
see the start of chapter 2.2.3
in the wonderful book~\cite{KAMStory}.
In our fleeting incursion into the subject,
we opt for following~\cite{Sardana}
in our understanding of the vertebral Liouville-Arnold theorem~\cite{Lio,Arnold} ---
independently established by Mineur in the prolongedly
overlooked works~\cite{Mineur1,Mineur2}, which have been
fortunately resurfaced in the last decade.
We thus require that
the infinitesimal symmetries associated to the conserved quantities
be complete vector fields.

In an integrable mechanical system
proven to be G1 by means of our theorem \ref{th:conf},
at least one of the conserved quantities in $\{Q\}$ is confining.
As a result,
the hypotheses of the Liouville-Arnold theorem are satisfied.
Moreover,
the system can be described as
free particles moving {\it exclusively} around torii.
Thusly,
theorem \ref{th:conf} connects stable ghost systems 
with the forefront of the field of integrable systems,
see~\cite{Miranda} and references therein.
The particulars and proof
of this relation are allocated to appendix \ref{app:integrability}.

\section{Examples}
\label{sec:ex}

This section is devoted to illustrating the impactful
implications of our previous results,
especially definition \ref{def:G1compact}
and theorem \ref{th:conf},
in the particular arena of ghost systems.
To achieve this goal,
we begin by presenting a two free parameter and one free function
family of integrable ghost systems (\ref{eq:exLag}).
The subsequent,
staggered demand that the axioms of theorems \ref{th:energy}, \ref{th:coercive} and \ref{th:conf}
be fulfilled by (\ref{eq:exLag}) yields
increasingly vast ghost systems
proven to be G1 in the compact sense of definition \ref{def:G1compact}.
Most remarkably,
the G1 ghost systems obtained via the enforcement of theorem \ref{th:conf}
are not only new,
but also could not have been derived
with the techniques in the literature.

Consider the regular Lagrangian system
\begin{align}
\label{eq:exLag}
L=
\frac{1}{2}(\dot{x}^2-\dot{y}^2)
-\frac{\omega^2}{2}(x^2-y^2)
-f(x+\alpha y),
\end{align}
where dot denotes derivation with respect to time,
the function $f:\mathbb{R}\rightarrow\mathbb{R}$
is taken to be at least twice differentiable
and $\omega^2,\alpha\in\mathbb{R}$.
We highlight that (\ref{eq:exLag})
is not an instance of a self-interacting extension of
the abundantly studied Pais-Uhlenbeck oscillator~\cite{Kaparulin:2015pda,Pavsic:2013noa,Ilhan:2013xe,Bender:2008gh,Mostafazadeh:2010yw},
owing to the degeneracy
in the frequencies of the independent variables $(x,y)$.

Because (\ref{eq:exLag}) is autonomous,
the (Lagrangian) energy
\begin{align}
\label{eq:exE}
E_L=
\frac{1}{2}(\dot{x}^2-\dot{y}^2)
+\frac{\omega^2}{2}(x^2-y^2)
+f(x+\alpha y)
\end{align}
is an obvious conserved quantity of the system.
Furthermore,
it is apparent that (\ref{eq:exLag}) is ghostly:
the energy (\ref{eq:exE}) does not have a global minimum.

It can be readily verified
that the system (\ref{eq:exLag})
has the functionally independent conserved quantity
\begin{align}
\label{eq:exQ}
Q=
\frac{1}{2}\left(\dot{x}^2+\dot{y}^2+2c_1\dot{x}\dot{y}\right)
+\frac{\omega^2}{2}\left(x^2+y^2+2c_1xy\right)
+c_2f(x+\alpha y),
\end{align}
with\footnote{The shrewd reader
may prefer to let $\alpha=\tan\theta$,
so that $c_1=\sin(2\theta)$
and $c_2=\cos(2\theta)$.}
\begin{align}
\label{eq:excons}
c_1=
\frac{2\alpha}{1+\alpha^2},
\qquad
c_2=
\frac{1-\alpha^2}{1+\alpha^2}.
\end{align}
We have unveiled (\ref{eq:exQ})
by direct computation,
in a befitting adaptation of
the classical results by Darboux~\cite{Darboux}.

To the best of our knowledge,
(\ref{eq:exLag}) with (\ref{eq:exQ})
is a novel family of integrable ghost systems.
In particular,
it cannot be obtained
via complex canonical transformations
on previously known systems,
as suggested in~\cite{Deffayet:2023wdg}.
Besides,
it is neither part of the family in equation (34) of~\cite{Kaparulin:2014vpa}
nor contained in any of
the families I-IV in section 3.4 of~\cite{Heredia:2024wbu}.

Further properties of (\ref{eq:exLag})
are discussed in appendix~\ref{app:properties},
including symmetries and its bi-Hamiltonian nature.

As announced,
we proceed to carving out various cases from (\ref{eq:exLag})
that are G1 in the compact sense of definition \ref{def:G1compact}.
To this aim,
we shall employ the results in section \ref{sec:Qs}.

\subsection{An application of theorem \ref{th:energy}}
\label{sec:ex1}

In order to determine the instances in which the system (\ref{eq:exLag})
is G1 in the equivalent sense of definitions \ref{def:G1bounded}
and \ref{def:G1compact},
we here employ theorem \ref{th:energy}.
We apply it,
not on the energy (\ref{eq:exE})
that always fails to meet the hypotheses,
but rather on the conserved quantity (\ref{eq:exQ}).
Specifically,
we require that (\ref{eq:exQ})
has a positive definite Hessian everywhere,
as well as a minimum.

For starters,
we compute the Hessian of (\ref{eq:exQ}),
which is of block-diagonal form
\begin{align}
\label{eq:hessQ}
\textrm{Hess}(Q)=
\begin{pmatrix}
H_1 & 0 \\
0 & H_2
\end{pmatrix},
\qquad
\textrm{where }
\quad 
H_1=
\begin{pmatrix}
1 & c_1 \\
c_1 & 1 
\end{pmatrix},
\qquad
H_2=
\omega^2 H_1+
c_2 f^{\prime\prime}
\begin{pmatrix}
1 & \alpha \\
\alpha & \alpha^2 \\
\end{pmatrix}.
\end{align}
Here,
the prime stands for total derivative:
$f^\prime(z)\equiv df/dz$ and so on.
As per Sylvester's criterion,
(\ref{eq:hessQ}) is positive definite iff all its principal minors
have a positive determinant.
The first principal minor is evidently positive.
For the second principal minor $H_1$ to have positive determinant,
we require that
\begin{align}
\label{eq:minor2}
1-c_1^2>0
\,\, \Rightarrow \,\,
(\alpha^2-1)^2>0
\,\, \Rightarrow \,\,
\alpha\neq\pm1.
\end{align}
Provided that (\ref{eq:minor2}) is fulfilled,
the positiveness of the determinant of
the third principal minor entails
\begin{align}
\label{eq:minor3}
(1+\alpha^2)\omega^2+(1-\alpha^2)f^{\prime\prime}>0.
\end{align}
Besides,
the determinant of (\ref{eq:hessQ})
is positive when
\begin{align}
\label{eq:minor4}
(1-c_1^2)\omega^4
+c_2\omega^2(\alpha^2-2c_1\alpha+1)f^{\prime\prime}>0
\,\, \Rightarrow \,\,
\omega^2\Big(\omega^2 +(1-\alpha^2)f^{\prime\prime}\Big)>0.
\end{align}
Minor algebraic effort is involved
in exhaustively choosing the tightest bound among
(\ref{eq:minor3}) and (\ref{eq:minor4}).
We thus arrive at the necessary and sufficient conditions
for the positive definiteness of (\ref{eq:hessQ}):
\begin{align}
\label{eq:poscond}
\omega^2>0,
\qquad
f^{\prime\prime} \gtrless \frac{\omega^2}{\alpha^2-1}
\quad \textrm{for } |\alpha|\lessgtr 1,
\end{align}
with $|\alpha|=1$ forbidden by (\ref{eq:minor2}).
We call attention to the fact that
we have derived $\omega^2>0$,
as opposed to taking it as a priori datum.

On the other hand,
for (\ref{eq:exQ}) to have a minimum,
we demand
\begin{align}
\label{eq:derQ0}
\frac{\partial Q}{\partial \dot{x}}=0,
\qquad
\frac{\partial Q}{\partial \dot{y}}=0,
\qquad
\frac{\partial Q}{\partial x}=0,
\qquad
\frac{\partial Q}{\partial y}=0.
\end{align}
The first two requirements immediately yield $\dot{x}=0=\dot{y}$.
Operating through,
the latter two requirements give rise to
an additional, necessary and sufficient condition
for the existence of a minimum:
that there exists a point $z_0\in\mathbb{R}$ such that
\begin{align}
\label{eq:mincond}
\omega^2z_0+(1-\alpha^2)f'(z_0)=0.
\end{align}
If extant,
such $z_0$ can be readily employed to
determine the point at which
(\ref{eq:exQ}) achieves its minimum:
\begin{align}
\label{eq:minvalue}
x=-\frac{1}{\omega^2}f^{\prime}(z_0),
\qquad
y=\frac{\alpha}{\omega^2}f^{\prime}(z_0).
\end{align}

All in all,
(\ref{eq:poscond}) and (\ref{eq:mincond})
comprise the necessary and sufficient conditions
for $Q$ in (\ref{eq:exQ})
to have a positive definite Hessian everywhere and a minimum.

\paragraph{Cases in point.}
Let $n\in\mathbb{N}$ and let $a,b\in\mathbb{R}$.
Exemplars from the family (\ref{eq:exLag})
satisfying both (\ref{eq:poscond}) and (\ref{eq:mincond})
include
\begin{align}
\label{eq:exth1}
f(z)=z^{2n}
\,\, \textrm{ for } \,\, 
|\alpha|<1,
\qquad
f(z)=az+b,
\qquad
f(z)=(z^2+1)^{1/n},
\end{align}

As around (\ref{eq:Hcounterth1}) before,
we again call attention to the fact that the simultaneous fulfillment of
(\ref{eq:poscond}) and (\ref{eq:mincond}) is needed
in order to prove G1 stability of a system of the form (\ref{eq:exLag}).
Counterexamples are
\begin{align}
\label{eq:counterth1}
f(z)=z^{2n}
\quad \textrm{ for } \,\, 
|\alpha|>1,
\qquad
f(z)=\frac{\omega^2}{1-\alpha^2}\big(\textrm{exp}(z)-z^2/2\big)
\quad \, \forall \,\, 
\alpha\neq\pm1.
\end{align}
These fail to meet (\ref{eq:poscond}) and (\ref{eq:mincond}), respectively.

An altogether distinct example
of a ghost system that is proven to be G1
as per theorem \ref{th:energy} appeared not long ago,
in section 3.3 of~\cite{Heredia:2024wbu}.

\subsection{An application of theorem \ref{th:coercive}}
\label{sec:ex2}

Next,
we ascertain the subset of (\ref{eq:exLag})
that is G1 in the equivalent sense of definitions \ref{def:G1bounded}
and \ref{def:G1compact}
with a basis on theorem \ref{th:coercive}.
Consequently,
we move forward to imposing that
the conserved quantity $Q$ in (\ref{eq:exQ})
is coercive.

In fact,
this is easier said than done.
Definition \ref{def:limcoerc} of a coercive function
is a usual suspect for finding counterexamples,
as opposed to a powerful hammer for forging a proof.
Accordingly,
we start by
restricting the free parameter space of $Q$,
spotting escape directions that are incompatible with
its desired coerciveness.
First,
we note that,
for $\alpha=\pm1$,
points of the form $(\dot{x},\dot{y},x,y)=(\lambda,\mp\lambda,0,0)\equiv\mathbf{p}_\lambda$,
with $\lambda\in\mathbb{R}$
satisfy
\begin{align}
\label{eq:zerolim}
\lim_{\lambda\rightarrow \pm\infty}
|Q({\mathbf{p}_\lambda})|=|c_2f(0)|\neq+\infty,
\end{align}
The above contradicts definition \ref{def:limcoerc} of a coercive function.
Thence,
in analogy to the preceding section \ref{sec:ex2},
we hereafter restrict attention to systems where $\alpha\neq \pm1$.
Second,
consider 
\begin{align}
\label{eq:wposQ}
\mathbf{q}_\lambda\equiv
\lambda(-\alpha\sqrt{|\omega^2|},\sqrt{|\omega^2|},-\alpha,1),
\qquad
Q({\mathbf{q}_\lambda})
=c_2(1-\alpha^2)\Big(\frac{|\omega^2|+\omega^2}{2}\Big)\lambda^2+c_2f(0).
\end{align}

If $\omega^2\leq0$,
it is clear that
\begin{align}
\label{eq:zerolim2}
\lim_{\lambda\rightarrow \pm\infty}
Q({\mathbf{q}_\lambda})=c_2f(0)\neq+\infty.
\end{align}
We point out the restrictive nature of (\ref{eq:zerolim2})
as compared to definition \ref{def:limcoerc}.
In detail,
we have dropped out the absolute value,
thereby forcing $Q({\mathbf{q}_\lambda})$ to tend to {\it plus} infinity
as $\lambda$ diverges.
Indeed,
this is the only possibility for $Q$ in (\ref{eq:exQ}) to be coercive,
since its kinetic piece is invariably positive definite with a minimum
for all $\alpha\neq\pm1$
--- recall (\ref{eq:minor2}) and around.
(This concept is formally introduced 
in appendix \ref{app:thenergy}:
see definition \ref{def:rescoer} of a {\it positively coercive} function.)
Thus, (\ref{eq:zerolim2}) contradicts definition \ref{def:limcoerc}.
As a result,
once more paralleling the preceding section \ref{sec:ex1},
we derive the requirement that frequencies be real for $Q$ to be coercive:
$\omega^2>0$.

In the same spirit,
we move forward to restricting the free functional space of $Q$.
Let the velocities $(\dot{x},\dot{y})$ take a fixed value.
In this case, 
the enforcement of definition \ref{def:rescoer} on $Q$ stipulates that
\begin{align}
\label{eq:potQ}
U(x,y)=
\frac{\omega^2}{2}\left(x^2+y^2+2c_1xy\right)
+c_2f(x+\alpha y)
\end{align}
be positively coercive.
It is beneficial to perform the change of variables
\begin{align}
\label{eq:newcoords}
\begin{cases}
\zeta=x+\alpha y,
\\
\eta=\alpha x+ y,
\end{cases}
\end{align}
so that
\begin{align}
\label{eq:potQnew}
U(\zeta,\eta)=\frac{\omega^2}{2(1+\alpha^2)}(\zeta^2+\eta^2)+c_2f(\zeta)=P(\zeta,\eta)+c_2f(\zeta).
\end{align}
Let $\eta$ take a fixed value.
For later convenience, 
we single out $\eta=0$.
Then,
for $Q$ to be positively coercive,
it must hold true that
\begin{align}
\label{eq:limzeta}
\lim_{\zeta\rightarrow\pm\infty} P(\zeta,0)+c_2f(\zeta)=+\infty,
\end{align}
which readily leads to $f(\zeta)$'s of the form
\begin{align}
\label{eq:fform}
c_2f(\zeta)=h(\zeta)-P(\zeta,0),
\end{align}
for any $C^2$ positively coercive function $h:\mathbb{R}\rightarrow\mathbb{R}$.
(That $h$ be $C^2$,
as opposed to simply continuous,
follows from the Lagrangian nature of the system (\ref{eq:exLag}).)

The structure (\ref{eq:fform}) is not only necessary,
but also sufficient for $Q$ to be coercive.
To see this,
we need only to prove that the potential piece $P+c_2f$ of $Q$ is positively coercive
--- the kinetic piece always being positive definite with minimum.
With this goal in mind and as per definition \ref{def:rescoer},
we must ensure that 
the multivariate limit to infinity of $P+c_2f$ diverges.
Namely,
we must find,
for every $k>0$,
a radius $R>0$ such that
$P+c_2f$ takes a value larger than $k$
at any point with modulus larger than $R$.

Because $h$ is positively coercive,
for any $k>0$
there exist a $\zeta_0\in\mathbb{R}$ such that,
for all $|\zeta|>\zeta_0$,
\begin{align}
\label{eq:lboundP}
P(\zeta,0)+c_2f(\zeta)>k.
\end{align}
Since $[-\zeta_0,\zeta_0]$ is compact with respect to $\mathbb{R}$
and $f$ is continuous,
$c_2f$ reaches a global minimum when evaluated in this interval.
Let $m$ denote such minimum.
Notice that,
by analogy to the kinetic piece,
$P$ has positive definite Hessian everywhere
and a minimum at $\zeta=0=\eta$.
This implies that
there exists $R>0$ such that,
for all $\zeta,\eta$ satisfying $\|(\zeta,\eta)\|>R$,
it holds true that $P(\zeta,\eta)>k-m$.
This completes the proof,
owing to the fact that,
for $\|(\zeta,\eta)\|>R$,
either $|\zeta|>\zeta_0$ and
\begin{align}
\label{eq:Pbound1}
P(\zeta,\eta)+c_2f(\zeta)>P(\zeta,0)+c_2f(\zeta)>k,
\end{align}
or $\zeta\in[-\zeta_0,\zeta_0]$,
in which case
\begin{align}
\label{eq:Pbound2}
P(\zeta,\eta)+c_2f(\zeta)>k-m+m=k.
\end{align}

Summing up,
for $Q$ to be coercive,
it is necessary and sufficient that
$|\alpha|\neq1$, $\omega^2>0$ and (\ref{eq:fform}) is fulfilled.
We draw attention to the fact that
these requirements encompass the previously derived
necessary and sufficient conditions (\ref{eq:poscond}) and (\ref{eq:mincond})
for $Q$ to have a positive definite Hessian everywhere and a minimum.
In other words,
the systems of the form (\ref{eq:exLag})
that are G1 on the grounds of theorem \ref{th:coercive}
include and extend those that are G1 on the grounds of theorem \ref{th:energy}.

\paragraph{Cases in point.}
Let $a,b\in\mathbb{R}$.
Concrete cases from the family (\ref{eq:exLag})
with a conserved quantity $Q$ of the form (\ref{eq:exQ})
that is coercive include:
all previous examples with a $Q$ that has positive definite Hessian everywhere and minimum (\ref{eq:exth1})
and instances beyond,
such as
\begin{align}
\label{eq:exth2no1}
f(z)=a(z^2-1)(z^2-4) 
\,\, \textrm{ for } \,\, 
a\gtrless0
\,\, \textrm{ and } \,\, 
|\alpha|\lessgtr 1,
\qquad
f(z)=b \sin(z).
\end{align}
These proposals' $Q$s
only have a Hessian that is positive definite everywhere
when
\begin{align}
    0\lesseqgtr a\lesseqgtr\frac{\omega^2}{10(1-\alpha^2)}
\,\, \textrm{ and } \,\, 
|\alpha|\lessgtr 1, \qquad\qquad  |b|< \frac{\omega^2}{|1-\alpha^2|},
\end{align}
respectively. 

Independent ghost systems formerly proven to be G1
on the (implicit) grounds of theorem \ref{th:coercive}
can be found in~\cite{Deffayet:2023wdg,Deffayet:2021nnt}
and in section 3.4 of~\cite{Heredia:2024wbu}.

\subsection{An application of theorem \ref{th:conf} }
\label{sec:ex3}

At last,
we establish the subset of (\ref{eq:exLag}) that is G1
in the equivalent sense of definitions \ref{def:G1bounded} and \ref{def:G1compact}
as per theorem \ref{th:conf}.
Namely,
we now require that the conserved quantity (\ref{eq:exQ}) is confining ---
recall our proposed definition \ref{def:conf}.

The same reasoning as that
between equations (\ref{eq:zerolim}) and (\ref{eq:zerolim2}) earlier on
applies in this case.
Indeed,
our discussion shows the existence of a line inside a level set of $Q$.
In particular,
this line is inside a path-connected component of a level set of $Q$.
This contradicts definition \ref{def:conf}.
Thus,
for $Q$ to be confining,
it must hold true that
$\alpha\neq\pm1$ and $\omega^2>0$.

Having restricted the free parameter space in (\ref{eq:exLag})
so as to make it compatible with a confining $Q$ in (\ref{eq:exQ}),
we replicate the strategy in the previous section \ref{sec:ex2}
and turn to restricting the free functional space.
To this aim,
we reconsider (\ref{eq:fform}).
Specifically,
we ponder over relaxations of the (sufficiently smooth) function $h(\zeta)$.
In seeking to exceed the positively coercive requirement,
we envisage an oscillatory function
whose maxima ultimately increase
while minima remain below some threshold.
For instance,
see plot D in figure~\ref{fig:plotexs}.

Rigorously,
let $f(\zeta)$ be of the form
\begin{align}
\label{eq:fformconfining}
c_2f(\zeta)=h(\zeta)-P(\zeta,0),
\end{align}
where $h:\mathbb{R}\rightarrow \mathbb{R}$
is any $C^2$ function such that,
for all $k\in\mathbb{R}$ and for all $\zeta\in\mathbb{R}$,
there exists $\zeta_1,\,\zeta_2\in\mathbb{R}$
with $\zeta_1<\zeta<\zeta_2$
such that $f(\zeta_1)>k$ and $f(\zeta_2)>k$.
Then,
(\ref{eq:exQ}) can be conveniently rewritten as
\begin{align}
\label{eq:Qzetaeta}
Q=
\frac{1}{2}\left(\dot{x}^2+\dot{y}^2+2c_1\dot{x}\dot{y}\right)
+P(\zeta,\eta)+h(\zeta)-P(\zeta,0)=Q_1+h(\zeta)=Q_2+h(\zeta)-P(\zeta,0).
\end{align}
Subsequently,
we prove that (\ref{eq:Qzetaeta}) is confining iff (\ref{eq:fformconfining}) is satisfied.

Consider the level set $Q^{\neg}(c)$
for some arbitrary $c\in\mathbb{R}$.
Let $A\subset\mathbb{R}^4$ denote a path-connected component thereof.
Further,
let $J\subset\mathbb{R}$ denote the projection of $A$ to the $\zeta$ variable.
First, 
we establish that $J$ is relatively compact with respect to $\mathbb{R}$.
Afterwards,
we demonstrate that $A$ is relatively compact with respect to $\mathbb{R}^4$.
This entails the sufficiency of (\ref{eq:fformconfining}) for
rendering (\ref{eq:Qzetaeta}) confining.

By virtue of (\ref{eq:fformconfining}),
for any point $\zeta_0\in J$  and choosing $k=c$,
there exist $\zeta_1<\zeta_0<\zeta_2$
such that $h(\zeta_1)>c$ and $h(\zeta_2)>c$.
Within $Q^{\neg}(c)$,
the values of $h(\zeta)$ are contained in $(-\infty,c]$,
since zero is the absolute minimum value of $Q_1$.
Consequently,
such $\zeta_1$ and $\zeta_2$ do not belong in $J$.
Furthermore, 
$J$ is path-connected because it is the image of the path-connected set $A$ by the projection,
which is a continuous map.
Altogether,
$J$ is a subset of $[\zeta_1,\zeta_2]$;
thus,
$J$ is relatively compact with respect to $\mathbb{R}$. 

The continuity of $h(\zeta)-P(\zeta,0)$
warrants its attainment of a global minimum value $m$
on the closure of $J$.
Consistently,
the maximum value of $Q_2$ on $A$ is $c-m$.
In fact,
$Q_2$ is a positive elliptic paraboloid
and, as such,
positively coercive as a function of its variables
--- which becomes apparent in terms of $(\dot{x},\dot{y},x,y)$.
Definition \ref{def:limcoerc},
in the restrictive sense wherein the absolute value is removed,
assures us of the existence of a radius $R$ such that
$Q_2(\dot{x},\dot{y},x,y)>c-m$
whenever $\|(\dot{x},\dot{y},x,y)\|>R$.
It follows that $A\subset B(0,R)$;
thereby,
$A$ is relatively compact with respect to $\mathbb{R}^4$.

To wrap up the proof,
we show by contradiction that (\ref{eq:fformconfining})
is not only sufficient but also necessary a condition
for (\ref{eq:Qzetaeta}) to be confining.
Assume $h(\zeta)$ does not have the property (\ref{eq:fformconfining}).
That is, there exist $k\in\mathbb{R}$ and $\zeta_0\in\mathbb{R}$
such that
$h(\zeta)\leq k$
for all $\zeta_1<\zeta_0$
(and similarly for $\zeta_2>\zeta_0$).
Then, the level set 
$Q^\neg(k)$ contains the subset 
\begin{align}
\label{eq:exhaustive}
\big\{(\dot{x},\dot{y},\eta,\zeta)\in\mathbb{R}^4
\,|\,\dot{x}=\sqrt{2(k-h(\zeta))},\,
\dot{y}=0,\, \eta=0,\, \zeta\leq\zeta_0\big\}, 
\end{align}
which is path-connected yet not relatively compact;
so $Q$ is not confining. 

Retrospectively,
we can enhance our understanding of the successful ansatz (\ref{eq:fformconfining}).
The free function $f$ is decomposed into two pieces:
the free function $h$ possessing the crucial property that guarantees the system is G1,
plus the fixed function $P(\zeta,0)$ 
that balances the relevant term of $Q$
--- recall (\ref{eq:potQnew}).
(This intuition also holds for the coercive instance (\ref{eq:fform}),
where $P(\zeta,0)$ surfaced naturally.)
Moreover,
the decomposition unlocks the exhaustiveness of the result.
The term $P(\zeta,0)$ provides the exact cancellation of
the quadratic terms $\zeta^2$ inside the square root in (\ref{eq:exhaustive}),
rendering it well-defined
for all values of $\zeta\leq\zeta_0$.

In short,
for $Q$ to be confining,
it is necessary and sufficient that
$|\alpha|\neq1$, $\omega^2>0$ and (\ref{eq:fformconfining}) holds.
We highlight that
(\ref{eq:fformconfining}) generalizes (\ref{eq:fform}),
so that
the systems of the form (\ref{eq:exLag})
that are G1 on the grounds of theorem \ref{th:conf}
broaden those that are G1 on the grounds of theorem \ref{th:coercive}.

\paragraph{Cases in point.}
Let $a,b\in\mathbb{R}$.
Exemplars of (\ref{eq:exLag})
with a conserved quantity $Q$ of the form (\ref{eq:exQ})
that is confining include:
all previous examples with a $Q$ that has
minimum and positive definite Hessian everywhere (\ref{eq:exth1}),
those associated with a coercive $Q$ (\ref{eq:exth2no1})
and instances beyond,
such as
\begin{align}
\label{eq:exth3no2}
f(z)=ae^z\cos(z)\,\, \textrm{ for } \,\, a\neq0\qquad f(z)=bz^2 \sin(z)\,\, \textrm{ for } \,\, 
|b|\geq\frac{\omega^2}{2|1-\alpha^2|}\equiv b_{\textrm{c}}.
\end{align}
The left-most example is neither bounded from below nor from above,
regardless of the sign of $a$\footnote{This example refutes footnote 5 in \cite{Deffayet:2023wdg}: the corresponding conserved quantity $Q$ is bounded neither from below nor from above, yet it is confining and thus guarantees that the system is G1 as per both definitions \ref{def:G1bounded} and \ref{def:G1compact}.}.
The right-most example
neatly showcases the doublet of single-sided implications (\ref{eq:impl1})-(\ref{eq:impl2}).

\begin{figure}[!htbp]
\hspace{-0.7cm}
\includegraphics[scale=0.55]{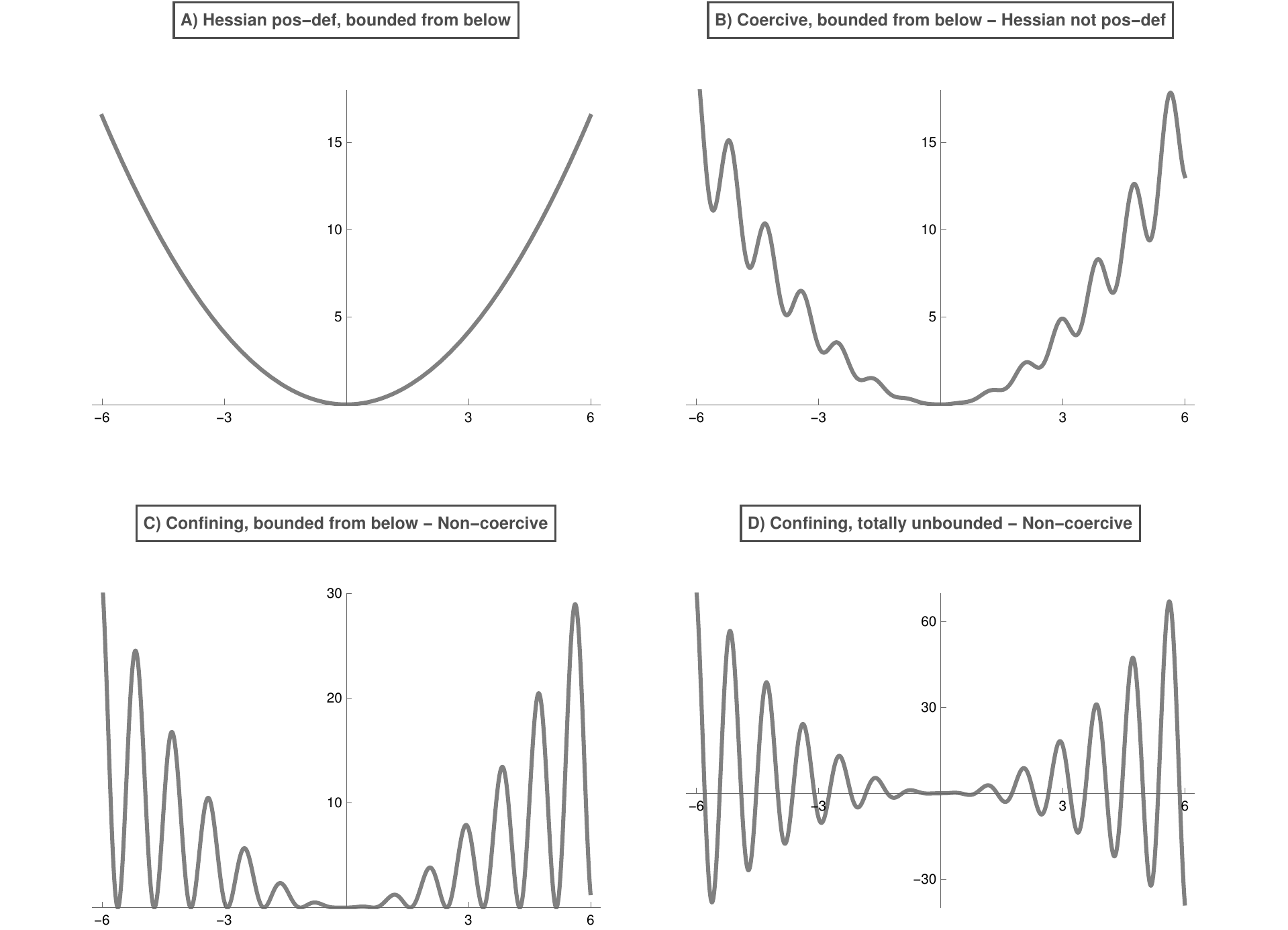}
\hspace{0.7cm}
\caption{Archetypal instances of the rich phenomenology of the right-most example in (\ref{eq:exth3no2}).
The function (\ref{eq:plottedQ}) is shown,
with its variable $\zeta$ labelling the abscissa.
The free parameters have been set to
$\omega^2=1$, $\alpha=0.3$, $\nu=7$ and
$b=0$ (plot A),
$b=1/8$ (plot B),
$b=b_c\approx0.55$ (plot C),
$b=2$ (plot D).
Thereby,
the itemization below (\ref{eq:exth3no2})
is illustrated.\\
All these are exemplary potentials 
of conserved quantities that can be used to analytically prove
G1 stability in the compact sense of definition \ref{def:G1compact}.
The lower half portrays radically new proposals,
including totally unbounded scenarios.
}
\label{fig:plotexs}
\end{figure}

\begin{itemize}
\item
For $b=0$,
the system decouples.
The conserved quantity $Q$ has positive definite Hessian everywhere and minimum.
Accordingly,
the system can be proven to be G1 through
either theorems \ref{th:energy}, \ref{th:coercive} or \ref{th:conf}.
\item
For $|b|\in(0,b_\textrm{c})$,
the conserved quantity $Q$ is coercive,
but the Hessian is not positive definite at some points.
Thence,
the system can be proven to be G1
via either theorems \ref{th:coercive} or \ref{th:conf},
but not via theorem~\ref{th:energy}.
\item
For $|b|\in[b_c,\infty)$,
the conserved quantity $Q$ is confining,
but it is neither coercive nor does it possess a positive definite Hessian everywhere.
In this case,
the system can only be proven to be G1 by means of theorem \ref{th:conf}.
\begin{itemize}
\item[$\circ$]
If $|b|=b_c$,
$Q$ has (infinitely many) global minima.
\item[$\circ$]
Else, 
$Q$ is unbounded from above and below.
\end{itemize}
\end{itemize}

Figure~\ref{fig:plotexs} exhibits the aforementioned progression in plots A to D,
respectively.
In more detail,
it represents the G1-triggering piece $h$
of the conserved quantity $Q$,
for the choice of $f$ in the right-most example of (\ref{eq:exth3no2})
and in terms of the $\zeta$ variable introduced in (\ref{eq:newcoords}):
\begin{align}
\label{eq:plottedQ}
h(\zeta)=
P(\zeta,0)+c_2f(\zeta)=
\frac{\omega^2}{2(1+\alpha^2)}\zeta^2+c_2b\zeta^2\sin(\nu\zeta).
\end{align}
(At this point,
it is beneficial to 
bring the discussion below (\ref{eq:exhaustive}) to mind.)
The parameter $\nu\in\mathbb{R}-\{0\}$
has been introduced by hand,
for an easier appreciation of the qualitative behavior of the plotted functions.

We reiterate that examples of (ghost) systems
provable to be G1 exclusively on the basis of theorem \ref{th:conf} are
novel in a deep sense.
Indeed,
they could not have been obtained
without an increase in both the understanding and techniques
employed before.

\section{Summary and conclusions}
\label{sec:concl}

On the basis of the profuse literature
in stable ghost theories,
we have put forward a cartography
for classical notions of stability currently in force.
Crucially,
we have differentiated between global and local stability notions,
depending on their application to
{\it all} or {\it some} solutions of the system:
see \ref{def:G1}-\ref{def:G2} versus \ref{def:L1}-\ref{def:L2}.
We stress that only Lyapunov-like proposals \ref{def:L2}
enjoy a long and fruitful tradition in the mathematical literature
of dynamical systems.
From this perspective,
alternative classical notions of stability enforced in ghost physics
open the door to expansive studies
in an altogether distinct discipline.

Subsequently,
we have formalized the global notions \ref{def:G1} and \ref{def:G2}:
see \ref{def:G1bounded}-\ref{def:G1compact} and \ref{def:G2formal},
respectively.
It is worth emphasizing that our propounded definitions
apply to generic dynamical systems (\ref{eq:soldef}),
independent of their possible Lagrangian/Hamiltonian and/or ghostly nature.
The fact that \ref{def:G1}
admits two formalizations \ref{def:G1bounded}-\ref{def:G1compact}
is consequential for dynamical systems
whose solutions are defined in
metric spaces without the Heine-Borel property,
as shown in figure~\ref{fig:G1s}.
We call attention to the relevance of these scenarios,
which encompass infinite-dimensional Banach and Hilbert spaces,
as well as geodesically incomplete Riemannian manifolds.
In all such cases,
there is no equivalence between \ref{def:G1bounded} and \ref{def:G1compact},
and only \ref{def:G1compact} implies \ref{def:G2formal}.
The dichotomy is to be resolved on physical grounds,
according to the desired phenomenology
of the systems under study.
For ghost systems,
we regard this as an important open question.

Taking definition \ref{def:G1compact} as bedrock throughout,
we have (re)explicated the mathematical foundations
sustaining the analytic proofs of global stability G1
for ghost systems:
see theorems \ref{th:energy} and \ref{th:coercive}.
The crux of the proofs relies on two insights.
First,
the characterization of ghost systems
as dynamical Lagrangian/Hamiltonian systems
for which the energy does not have a global minimum.
Second,
the identification of
a functionally independent conserved quantity $Q$
possessing a key property
as the means to ascertain the desired G1 feature in the system.
The key property in 
(a befitting adaptation of) theorem \ref{th:energy}
is two-fold:
$Q$ has a minimum
and its Hessian is positive definite everywhere.
A more sophisticated key property
is employed in theorem \ref{th:coercive}:
$Q$ is coercive,
in the sense of definition \ref{def:limcoerc}.
Theorems \ref{th:energy} and \ref{th:coercive}
formalize and classify the examples
in~\cite{Kaparulin:2015owa,Abakumova:2018eck,Heredia:2024wbu,Kaparulin:2014vpa,Kaparulin:2015uxa,Kaparulin:2015pda,Abakumova:2017syd,Kaparulin:2017swa,Abakumova:2017uto,Kaparulin:2017uar,Kaparulin:2018npv,Abakumova:2019ifi,Kaparulin:2019njc,Abakumova:2019wpn,Abakumova:2019dov,Kaparulin:2020gqn}
and~\cite{Deffayet:2023wdg,Heredia:2024wbu,Deffayet:2021nnt},
respectively.

We have extended the aforementioned known frameworks
in an unprecedented manner:
see our main result theorem \ref{th:conf}.
In more detail,
within the same paradigm,
we have generalized the key property to:
$Q$ is confining,
in the sense of our proposed novel definition \ref{def:conf}.
The architecture conformed by theorem \ref{th:conf} and definition \ref{def:conf}
applies to generic dynamical systems in Hausdorff spaces.
The supervened augmentation in ghost systems provable to be G1
as a direct consequence of our work
is both quantitative and qualitative.
In the latter regard,
we have explicitly proved that $Q$ need {\it not} be bounded from below
for the system to be G1,
thereby rigorously contravening the status quo of ghost physics
in a counter-intuitive way.
For an instinctive impression,
compare the upper and lower plots in figure~\ref{fig:plotexs},
remarkably the totally unbounded scenario in plot D.

We have introduced
confining functions
tailored to
the stability definition \ref{def:G1compact} for a dynamical system.
Confining functions englobe coercive functions 
in metric spaces with the Heine-Borel property,
see proposition~\ref{prop:coerconf}.
Our definition \ref{def:conf}
is intrinsic (i.e.~coordinate-independent)
and it is practicable.
Indeed,
its direct implementation
has allowed us to 
put forward novel examples
of G1 stable ghost systems
that could not have been obtained
with the artillery in the literature,
see (\ref{eq:exth3no2}).

Moreover,
we have explicitly related confining functions
to the Liouville-Arnold theorem~\cite{Sardana}.
In lemma \ref{lem:involution},
we have proved that,
given an integrable mechanical system,
the presence of one confining conserved quantity
guarantees the fulfillment of all hypotheses
and ensures that the phase space is foliated 
exclusively by compact leaves.
This result
connects our work with open questions in
the mathematical field of integrable systems~\cite{Miranda}.
Its exploitation
conforms to yet another appealing
direction for inter-disciplinary research.

\paragraph{Outlook.} 
More ambitiously,
one may envisage a change of paradigm
in the analytic proof of global stability:
without appeal to conserved quantities.
This tantalizing possibility
has been considered in~\cite{Deffayet:2023wdg}.
Deffayet and coauthors
set forth and illustrate a conjecture
in such an unexplored arena.
We deem the search for analytic proofs of global stability
without calling for a conserved quantity
pivotal to the development of realistic ghost models.
This physically motivated search seems particularly enticing
for the community studying non-integrable systems.

The application of our results to field theory and their quantization
are obvious follow ups of noteworthy physical interest.
In both contexts,
the inequivalence between global stability definitions
\ref{def:G1bounded} and \ref{def:G1compact}
is consequential. 
This disjunction
and other functional analytical subtleties
are essential aspects to bear in mind,
even in non-ghostly basic settings such as a free massless scalar field~\cite{BarberoG:2022lji,BarberoG:2024qae}.
We defer the investigation of such compelling topics to future works.

\paragraph{Acknowledgements.}
We would like to show our high appreciation to Xavier Gr\`acia,
for his suggestion to include the relation to proper functions,
for perspicacious feedback 
and for pointing out~\cite{Palais:1957,SteenSeebach,Lee:topo}.
We are indebted to
Jose Beltr\'an Jim\'enez,
Saskia Demulder,
Manuel Lainz Valc\'azar
and Alexander Vikman
for insightful conversations
at the onset of investigations.
We have benefitted from 
Ivano Basile's constructive comments on
an earlier version of this text.
VED and JGR greatly thank
Shinji Mukohyama
for enlightening discussions
during a visit to
the Yukawa Institute for Theoretical Physics
(YITP).
Additionally,
VED and JGR are deeply grateful to
all members of the Kavli Institute
for the Physics and Mathematics
of the Universe
(Kavli IPMU)
for their extended hospitality
during the development of this work;
in particular, 
host Elisa G.~M.~Ferreira
and vibrant interlocutors
Andr\'es Nahuel Briones,
Thomas Edward Melia and Tsutomu Yanagida.
\\
The work of VED,
as well as an invited visit of JGR
to the Excellence Cluster ORIGINS, are funded by
the Deutsche Forschungsgemeinschaft
(DFG, German Research Foundation),
under Germany's Excellence Strategy
-- EXC-2094 -- 390783311. 
JGR acknowledges financial support
from Ministerio de Ciencia, Innovaci\'on y Universidades in Spain,
project D2021-125515NB-21
and from Agencia Estatal de Investigaci\'on,
project RED2022-134301-T.

\appendix

\section{Supplementary proofs}
\label{app:thenergy}

This appendix proves
theorems \ref{th:energy} and \ref{th:coercive},
as well as
the one-sided implications (\ref{eq:impl1}) and (\ref{eq:impl2}).
The reader may find useful the books in topology by Schechter~\cite{Schechter} and especially by Lee~\cite{Lee:topo}.

With the aim of proving implication (\ref{eq:impl1}),
we introduce the notion of a positively coercive function.
\begin{definition}
\label{def:rescoer}
A function $f:\mathbb{R}^n\rightarrow \mathbb{R}$
is \textbf{positively coercive} if
the value of $f(x)$ tends to plus infinity as the modulus of $x$ diverges:
\begin{align}
\label{eq:rescoer}
\displaystyle\lim_{\|x\|\rightarrow +\infty}f(x)\rightarrow +\infty.
\end{align}
\end{definition}
\noindent
Notice that a positively coercive function
is an instance of a coercive function
--- recall definition \ref{def:limcoerc}.
In particular,
coercive functions with a global minimum
are positively coercive.
Implication (\ref{eq:impl1})
follows readily.
\begin{lem}
\label{lem:defposcoercive}
\emph{\textbf{[Entails implication (\ref{eq:impl1}).]}}
If a twice-differentiable function $f:\mathbb{R}^n\rightarrow \mathbb{R}$
has a minimum and a positive definite Hessian everywhere,
then it is positively coercive.
\end{lem}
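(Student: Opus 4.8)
The plan is to show that the two hypotheses together force a \emph{linear} lower bound on $f$, which immediately yields positive coercivity in the sense of definition \ref{def:rescoer}. First I would record the two elementary consequences of the assumptions: since $f$ attains a minimum at some $x_0\in\mathbb{R}^n$, we have $\nabla f(x_0)=0$ and $f(x_0)=m:=\min f$; and since the Hessian is positive definite everywhere on the convex domain $\mathbb{R}^n$, the standard second-order characterization makes $f$ strictly convex. Strict convexity together with $\nabla f(x_0)=0$ then gives $f(x)>m$ for every $x\neq x_0$, so that $x_0$ is the unique global minimizer.

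The crucial step is to convert this strict gap above $m$ into growth at infinity by exploiting \emph{global} convexity rather than the local second-order data. I would fix any radius $\epsilon>0$ and consider the sphere $S_\epsilon=\{x:\|x-x_0\|=\epsilon\}$. Because $S_\epsilon$ is compact and $f$ is continuous, $f$ attains on $S_\epsilon$ a minimum value $\mu$, and by the previous paragraph $\mu>m$; set $\delta:=\mu-m>0$. For an arbitrary point $x$ with $\|x-x_0\|=R>\epsilon$, the point $x':=x_0+(\epsilon/R)(x-x_0)$ lies on $S_\epsilon$ and on the segment from $x_0$ to $x$, with $x'=(1-s)x_0+s\,x$ for $s=\epsilon/R$. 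Convexity gives $f(x')\leq(1-s)m+s\,f(x)$, while $f(x')\geq\mu=m+\delta$. Combining these and solving for $f(x)$ yields
\begin{align}
\label{eq:coercplan}
f(x)\geq m+\frac{\delta}{\epsilon}\,\|x-x_0\|,
\end{align}
and letting $\|x\|\to\infty$ forces $f(x)\to+\infty$, which is exactly (\ref{eq:rescoer}).

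The point I expect to be the main obstacle is resisting the natural but flawed temptation to extract a uniform \emph{quadratic} lower bound directly from the positive definite Hessian. Positive definiteness at every point does not provide a uniform lower bound on the smallest Hessian eigenvalue, which may decay to zero as $\|x\|\to\infty$ --- for instance $f(x)=\sqrt{1+x^2}$ on $\mathbb{R}$ has $f''\to0$ yet is still coercive. The remedy, and the conceptual heart of the argument, is precisely the chord inequality above: the positive gap $\delta$ on a single small sphere propagates into the linear estimate (\ref{eq:coercplan}) without any reference to the size of the Hessian far from $x_0$. The only care needed elsewhere is the routine justification that positive definiteness on all of $\mathbb{R}^n$ entails strict convexity and hence $f(x)>m$ off $x_0$.
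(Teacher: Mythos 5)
Your proof is correct and follows essentially the same route as the paper's: both establish that $x_0$ is the (unique, global) minimizer, extract a positive gap on a compact sphere around $x_0$ via continuity, and propagate that gap into a linear lower bound along rays using the convexity implied by the everywhere positive definite Hessian. The only difference is cosmetic --- you apply the chord (secant) inequality of convexity directly on the segment through $S_\epsilon$, whereas the paper reaches the same linear estimate via the mean value theorem combined with the first-order Taylor (supporting-hyperplane) bound at an intermediate point; your version is, if anything, tidier.
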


\begin{proof}
Let $x_0\in\mathbb{R}^n$ be the point where $f$ has a minimum.
The Taylor expansion up to second order of $f$ around $x_0$ satisfies
\begin{align}
\label{eq:taylor}
f(x)=f(x_0)
+\nabla f(x_0)(x-x_0)
+(x-x_0)^T\Hessian f\big(x_0+a(x-x_0)\big)(x-x_0),
\end{align}
for some $a\in (0,1)$.
Because $f$ has a positive definite Hessian everywhere, $f(x)\geq f(x_0)$.
The equality only holds for $x=x_0$.
Namely,
$x_0$ is the global minimum.

Because $\mathbb{R}^n$ has the Heine-Borel property,
any ball therein is a compact set.
The corresponding boundary is a closed set inside the ball
and thus
itself is a compact set. 
Any continuous function in a compact set
achieves both an absolute maximum and an absolute minimum in the set.
In particular,
consider the boundary of the unit ball centered on $x_0$.
Let $f_{\ast}$ denote
the absolute minimum of $f$ in $\partial B(x_0,1)$.
Notice that $f_{\ast}>f(x_0)$.

By definition,
$f$ is positively coercive if,
for every $d\in\mathbb{R}^+$,
there exists
a radius $R_d$ such that,
for any point $x\in\mathbb{R}^n$ with bigger modulus $\|x\|>R_d$,
the inequality $f(x)>d$ holds true.  
Without loss of generality,
we take $d>f(x_0)$.
In the following,
we prove that the ansatz
\begin{align}
\label{eq:ansatzRd}
R_d=\frac{d-f(x_0)}{f_{\ast}-f(x_0)}+\|x_0\|+1
\end{align} 
realizes the requirement.
Observe that $d>f(x_0)$ ensures $R_d>0$.

Let $r=\|x-x_0\|$
and consider the auxiliary point 
\begin{align}
\label{eq:auxpoint}
y=\frac{x-x_0}{r}+x_0\in\partial B(x_0,1).
\end{align}
By virtue of the mean value theorem,
\begin{align}
\label{eq:posdef1}
f(y)-f(x_0)=\nabla f(z)(y-x_0),
\qquad
z\equiv y+b(x_0-y),
\end{align}
for some $b\in(0,1)$.
The Taylor expansion up to second order of $f$ around $z$ fulfills
\begin{align}
\label{eq:posdef2}
f(x)=f(z)
+\nabla f (z)(x-z)
+(x-z)^T\Hessian f\big(z+c(x-z)\big)(x-z),
\end{align}
for some $c\in(0,1)$.
Because $f$ has a positive definite Hessian everywhere,
$f(x)\geq f(z)+\nabla f (z)(x-z)$.
Since $x-z=(y-x_0)(r-1+b)$
and making use of (\ref{eq:posdef1}),
it follows that
\begin{align}
\label{eq:ineqa}
f(x)\geq f(z)+\nabla f (z)(y-x_0)(r-1+b)
=f(z)+\big[f(y)-f(x_0)\big](r-1+b).
\end{align}
Provided that $\|x\|>R_d$ in (\ref{eq:ansatzRd})
and employing the triangle inequality,
\begin{align}
\label{eq:ineqb}
r-1+b=
\|x-x_0\|-1+b>
\|x\|-\|x_0\|-1+b>
\frac{d-f(x_0)}{f_{\ast}-f(x_0)}.
\end{align}
holds true.
Substitution of (\ref{eq:ineqb}) in (\ref{eq:ineqa}),
in light of $x_0$ being a global minimum
and the very definitions of $y$ and $f_{\ast}$,
finally yields
\begin{align}
f(x)>
f(z)+\big[f(y)-f(x_0)\big]
\frac{d-f(x_0)}{f_{\ast}-f(x_0)}
>f(x_0)
+\big[f_{\ast}-f(x_0)\big]
\frac{d-f(x_0)}{f_{\ast}-f(x_0)}=d.
\end{align}
\end{proof}

Definition \ref{def:limcoerc}
of a coercive function can be straightforwardly extended to generic metric spaces.

\begin{lem}
\label{lem:bounded}
Let $M$ and $N$ be metric spaces.
A coercive function $f:M\rightarrow N$
has bounded level sets.
Moreover,
if $M$ has the Heine-Borel property,
the level sets are relatively compact with respect to $M$. 
\end{lem}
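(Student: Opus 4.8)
The plan is to first record the natural extension of definition \ref{def:limcoerc} to metric spaces and then apply it directly. Writing $d_M$ and $d_N$ for the metrics on $M$ and $N$ and fixing arbitrary basepoints $p\in M$ and $q\in N$, the coercive condition reads: for every $K>0$ there exists $R>0$ such that $d_M(x,p)>R$ implies $d_N(f(x),q)>K$. Since boundedness in a metric space does not depend on the chosen basepoint, the particular choice of $p$ and $q$ is immaterial to the conclusion.

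For the first assertion I would fix $c\in N$ and inspect the level set $f^{\neg}(c)$. On this set $f$ takes the constant value $c$, so every $x\in f^{\neg}(c)$ satisfies $d_N(f(x),q)=d_N(c,q)=:K_0$, a fixed finite number. Applying coercivity with threshold $K=K_0+1$ yields a radius $R>0$ such that $d_M(x,p)>R$ forces $d_N(f(x),q)>K_0+1$. No point of $f^{\neg}(c)$ can satisfy this, because there $d_N(f(x),q)=K_0<K_0+1$. Hence $d_M(x,p)\le R$ for all $x\in f^{\neg}(c)$, i.e.\ the level set is contained in the closed ball $\{x\in M:d_M(x,p)\le R\}$, which is exactly boundedness.

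For the second assertion I would assume $M$ has the Heine-Borel property and pass to the closure. The set $\overline{f^{\neg}(c)}$ is closed by construction and still bounded, since it lies inside the closed ball $\{x\in M:d_M(x,p)\le R\}$ found above. The Heine-Borel property then promotes ``closed and bounded'' to ``compact'', so $\overline{f^{\neg}(c)}$ is compact. By the definition of relative compactness this is precisely the assertion that $f^{\neg}(c)$ is relatively compact with respect to $M$.

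The argument is essentially a one-line application of coercivity, so there is no genuine obstacle; the only points requiring care are the faithful transcription of definition \ref{def:limcoerc} to the metric setting --- in particular the basepoint-independence of boundedness --- and the elementary observation that the closure of a bounded set remains bounded, which is what allows the Heine-Borel property to be invoked in the second part.
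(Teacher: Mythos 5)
Your proof is correct and follows essentially the same route as the paper: apply coercivity with a threshold tied to the value $c$ to trap the level set inside a ball, then invoke the Heine-Borel property to get compactness of the closure. The only (cosmetic) differences are that you spell out the basepoint formulation of coercivity on metric spaces --- which the paper leaves implicit behind the norm-like notation $\|x\|$, $\|c\|$ --- and that you apply Heine-Borel directly to the closed bounded closure, where the paper instead takes the ball to be compact and uses that a closed subset of a compact set is compact.
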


\begin{proof}
Consider an arbitrary but fixed $c\in N$.
Since $f$ is coercive,
for $\|c\|\in\mathbb{R}$
there exists a radius $R_c\in\mathbb{R}^+$ such that,
for any point $x\in M$ with bigger modulus $\|x\|>R_c$, 
the inequality $\|f(x)\|>\|c\|$ holds true.
In particular,
at any point outside the ball $B(0,R_c)\subset M$,
$f$ takes a value different from $c$.
Consequently,
the level set $f^\neg(c)\subset B(0,R_c)$ and, therefore, it is bounded.

If $M$ has the Heine-Borel property,
then balls are compact sets.
Since balls are closed in metric spaces,
then the closure of $f^\neg(c)$ is also inside $B(0,R_c)$.
The closure of $f^\neg(c)$ is compact
as a result of it being a closed set inside a compact.
\end{proof}

\noindent
The above generalization
allows us to succinctly prove
theorems \ref{th:energy} and \ref{th:coercive},
which are hereafter restated for the convenience of the reader.
\setcounter{theorem}{0}
\begin{theorem}
Given a system $\emph{Sol}$
with $\mathcal{P}=\mathbb{R}^n$,
if the energy $E:\mathbb{R}^n\rightarrow \mathbb{R}$
is a
twice differentiable
conserved quantity
that
has a minimum
and whose Hessian is positive definite everywhere,
then the system is \emph{G1}.
\end{theorem}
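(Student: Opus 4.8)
The plan is to obtain this theorem as a corollary of Theorem \ref{th:coercive}, which itself descends from the main result Theorem \ref{th:conf}. The crucial observation is that the two hypotheses imposed on $E$ --- the existence of a minimum together with an everywhere positive definite Hessian --- are precisely the premises of Lemma \ref{lem:defposcoercive}. Invoking that lemma, I would first conclude that $E$ is positively coercive, and hence, since positive coercivity is a special case of coercivity (as noted beneath definition \ref{def:rescoer}), that $E$ is coercive in the sense of definition \ref{def:limcoerc}.

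With coercivity secured, the remainder is immediate. Because $E$ is assumed to be a conserved quantity of $\textrm{Sol}$, for any fixed $\gamma\in\textrm{Sol}$ the composition $E\circ\gamma$ is constant, say equal to some $c$; consequently the image $\gamma(I_\gamma)$ is contained in the single level set $E^\neg(c)$. Since $\mathcal{P}=\mathbb{R}^n$ possesses the Heine-Borel property, Lemma \ref{lem:bounded} guarantees that $E^\neg(c)$ is relatively compact with respect to $\mathbb{R}^n$. Being a subset of a relatively compact set in a Hausdorff space, $\gamma(I_\gamma)$ is itself relatively compact (equivalently, bounded), which is exactly G1 in both equivalent senses of definitions \ref{def:G1bounded} and \ref{def:G1compact}.

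Equivalently and more tersely, once $E$ has been shown coercive I could simply apply Theorem \ref{th:coercive} with the choice $Q=E$ and conclude. I would nonetheless favor spelling out the level-set argument, as it transparently displays the two pillars advertised in the surrounding discussion: solutions are trapped inside level sets of a conserved quantity, and those level sets are in turn confined by the growth of $E$ at infinity.

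As to where the genuine content resides: at the level of this theorem there is essentially no obstacle, the statement reducing to a chaining of Lemma \ref{lem:defposcoercive} into Lemma \ref{lem:bounded} (or into Theorem \ref{th:coercive}). The substantive work has been quarantined inside Lemma \ref{lem:defposcoercive}, whose proof must convert the merely pointwise positivity of $\Hessian E$ into honest growth at infinity; the delicate step there is the mean-value and Taylor estimate that bounds $E(x)$ below by a quantity growing linearly in the radius $\|x-x_0\|$ along rays emanating from the minimizer $x_0$. That estimate is what I would regard as the true heart of the matter, and it is already discharged upstream.
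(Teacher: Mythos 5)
Your proposal is correct and follows essentially the same route as the paper: the paper's proof likewise traps $\gamma(I_\gamma)$ in a level set $E^{\neg}(e)$, invokes lemma \ref{lem:defposcoercive} to conclude that $E$ is coercive, and then applies lemma \ref{lem:bounded} together with the Heine-Borel property of $\mathbb{R}^n$ to obtain G1 in the equivalent senses of definitions \ref{def:G1bounded} and \ref{def:G1compact}. Even your terser alternative --- viewing the theorem as a corollary of theorem \ref{th:coercive} with $Q=E$ --- is explicitly noted in the paper, and your diagnosis that the real work lives in lemma \ref{lem:defposcoercive} matches where the paper places it.
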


\begin{proof}
Let $e$ denote the constant value of $E$    
along a fixed solution $\gamma$.
It follows that $\gamma(I_\gamma)\subset E^{\neg}(e)$.

Because $E$ is twice differentiable,
has a minimum and has a positive definite Hessian everywhere,
by virtue of lemma \ref{lem:defposcoercive},
it is coercive.
By lemma \ref{lem:bounded},
its level sets are both bounded
and relatively compact with respect to $\mathbb{R}^n$,
which has the (crucial) Heine-Borel property.
Ergo the system is G1
in the equivalent sense
of definitions \ref{def:G1bounded} and \ref{def:G1compact}.
\end{proof}

\begin{theorem}
Given a system $\emph{Sol}$
with $\mathcal{P}=\mathbb{R}^n$,
if there exists a conserved quantity
$Q:\mathbb{R}^n\rightarrow \mathbb{R}$
that is coercive,
then the system is \emph{G1}.
\end{theorem}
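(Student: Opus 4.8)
The plan is to mirror the proof of Theorem \ref{th:energy} given just above, but now leaning \emph{directly} on the coercivity hypothesis, so that the detour through the Hessian and Lemma \ref{lem:defposcoercive} is no longer needed. The backbone is the same two-fact argument advertised after Theorem \ref{th:energy}: solutions live inside level sets of a conserved quantity, and those level sets are small enough to be relatively compact. All the analytic work has already been packaged into Lemma \ref{lem:bounded}, so the proof amounts to wiring that lemma together with the definition of a conserved quantity.

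Concretely, I would first fix an arbitrary curve $\gamma\in\textrm{Sol}$, choose any $t_0\in I_\gamma$, and set $q_0\equiv Q(\gamma(t_0))$. Since $Q$ is a conserved quantity in the sense of definition \ref{def:consquant}, we have $Q(\gamma(t))=q_0$ for every $t\in I_\gamma$, whence the containment $\gamma(I_\gamma)\subset Q^{\neg}(q_0)$. Next I would apply Lemma \ref{lem:bounded} with $M=\mathbb{R}^n$ and $N=\mathbb{R}$: because $Q$ is coercive and $\mathbb{R}^n$ enjoys the Heine-Borel property, the level set $Q^{\neg}(q_0)$ is bounded and, crucially, relatively compact with respect to $\mathbb{R}^n$. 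Therefore $\gamma(I_\gamma)$ sits inside the compact set $\overline{Q^{\neg}(q_0)}$; as $\mathbb{R}^n$ is Hausdorff, lying inside a compact set is equivalent to being relatively compact. Since $\gamma$ was arbitrary, every solution has relatively compact image, and the system is \emph{G1} in the equivalent sense of both definitions \ref{def:G1bounded} and \ref{def:G1compact}.

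There is no genuine obstacle here, which is precisely the point of having isolated Lemma \ref{lem:bounded} beforehand: the theorem is a short corollary. The one subtlety worth flagging explicitly is the indispensable role of the Heine-Borel property of $\mathbb{R}^n$, which is what upgrades ``bounded level set'' to ``relatively compact level set'' and, simultaneously, collapses the two \emph{G1} formalizations into one. Absent Heine-Borel, coercivity would only deliver definition \ref{def:G1bounded}, not definition \ref{def:G1compact}.

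As an alternative route, I would note that Theorem \ref{th:coercive} also drops out as a corollary of the main result Theorem \ref{th:conf}: by implication (\ref{eq:impl2}) (established in proposition \ref{prop:coerconf}), a coercive $Q$ on $\mathbb{R}^n$ is confining, and Theorem \ref{th:conf} then yields \emph{G1} in the compact sense directly. I would present the self-contained Lemma \ref{lem:bounded} argument as the primary proof, since it keeps the appendix logically prior to Section \ref{sec:confining} and exhibits the boundedness of the level sets explicitly, while mentioning the corollary-of-\ref{th:conf} derivation for conceptual economy.
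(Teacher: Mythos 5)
Your proof is correct and follows essentially the same route as the paper's own: both confine each solution $\gamma(I_\gamma)$ to a level set $Q^{\neg}(q_0)$ via definition \ref{def:consquant}, then invoke lemma \ref{lem:bounded} together with the Heine-Borel property of $\mathbb{R}^n$ to conclude that the level set, and hence the solution's image, is relatively compact, yielding G1 in the equivalent sense of definitions \ref{def:G1bounded} and \ref{def:G1compact}. Your secondary observation, that the theorem also follows as a corollary of theorem \ref{th:conf} via implication (\ref{eq:impl2}), matches what the paper itself notes below the theorem statement.
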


\begin{proof}
Let $q$ denote the constant value of $Q$    
along a fixed solution $\gamma$.
It follows that $\gamma(I_\gamma)\subset Q^{\neg}(q)$.

Because $Q$ is coercive,
by virtue of lemma \ref{lem:bounded},
its level sets are both bounded
and relatively compact with respect to $\mathbb{R}^n$,
which has the (crucial) Heine-Borel property.
Ergo the system is G1
in the equivalent sense
of definitions \ref{def:G1bounded} and \ref{def:G1compact}.
\end{proof}

Lastly,
we prove implication (\ref{eq:impl2}),
which holds true for metric spaces,
as long as 
the source space has the Heine-Borel property.

\begin{proposition}
\label{prop:coerconf}
\emph{\textbf{[Entails implication (\ref{eq:impl2}).]}}
Let $M$ be a metric space with the Heine-Borel property
and let $N$ be a metric space.
A coercive function $f:M\rightarrow N$ is confining. 
\end{proposition}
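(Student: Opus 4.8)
The plan is to reduce the statement almost entirely to Lemma \ref{lem:bounded}, which already does the substantive work on the level sets themselves; the only genuinely new ingredient is the passage from a level set to its path-connected components, and this is purely set-theoretic once relative compactness of the level sets is in hand.

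First I would invoke Lemma \ref{lem:bounded}. Since $f:M\rightarrow N$ is coercive (definition \ref{def:limcoerc}) and $M$ enjoys the Heine-Borel property, that lemma guarantees that for every $c\in N$ the level set $f^{\neg}(c)$ is relatively compact with respect to $M$; equivalently, its closure $\overline{f^{\neg}(c)}$ is compact. Next I would fix an arbitrary $c\in N$ and an arbitrary path-connected component $A$ of $f^{\neg}(c)$. By construction $A\subseteq f^{\neg}(c)\subseteq\overline{f^{\neg}(c)}$, so $A$ is contained in the compact set $\overline{f^{\neg}(c)}$. Because $M$ is metric and hence Hausdorff, being contained in a compact set is equivalent to being relatively compact (cf.\ the discussion following definition \ref{def:G1compact}); therefore $A$ is relatively compact with respect to $M$. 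As both the component $A$ and the value $c$ were arbitrary, every path-connected component of every level set of $f$ is relatively compact, which is exactly the requirement of definition \ref{def:conf}. Hence $f$ is confining.

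The main obstacle is, honestly, that there is almost none: the analytic content has been front-loaded into Lemma \ref{lem:bounded}, and what remains is the trivial observation that a subset of a set lying inside a compact set also lies inside that compact set. The single point deserving a modicum of care is the equivalence ``relatively compact $\iff$ contained in a compact set,'' which requires the ambient space to be Hausdorff; this is automatic here since $M$ is assumed metric. I would flag this hypothesis explicitly so that the step from ``$A$ sits inside a compact set'' to ``$A$ is relatively compact'' is airtight, and otherwise present the argument in the three short moves above.
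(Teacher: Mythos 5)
Your proposal is correct and follows essentially the same route as the paper: the paper's proof likewise reduces everything to lemma \ref{lem:bounded} (using its boundedness clause plus Heine-Borel to place $f^{\neg}(c)$ inside a compact ball, where you instead cite the lemma's ``relatively compact'' clause directly --- an immaterial difference) and then concludes, exactly as you do, that the component $A\subset f^{\neg}(c)$ lies inside a compact set and is therefore relatively compact because the metric space $M$ is Hausdorff. Your explicit flagging of the Hausdorff hypothesis matches the paper's own emphasis, so nothing is missing.
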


\begin{proof}
Consider an arbitrary but fixed $c\in N$.
Let $A$ denote a path-connected component of the level set $f^\neg(c)$.
The latter satisfies $A\subset f^\neg(c)$,
which is bounded as per lemma \ref{lem:bounded}.
As such, 
$f^\neg(c)\subset M$ is contained in a certain ball.
Since $M$ has the Heine-Borel property, 
balls are compact.
Moreover,
as a metric space,
$M$ is Hausdorff.
Thus,
$A$ is relatively compact with respect to $M$
because it is inside a compact.
\end{proof}

\section{Relation to proper functions}
\label{app:proper}

In topology and in differential geometry,
there exists the appurtenant concept of a {\it proper function}.
This is closely related to the notion of coercivity,
see proposition~\ref{lem:coerciveproper}.
Crucially for us,
proper functions have compact level sets.
What is more,
there are useful results in the literature
providing sufficient conditions for a function to be proper,
see e.g.~proposition 4.93 in~\cite{Lee:topo}.
Altogether,
we find proper functions to be
a promising addition to the study of G1 global stability,
in the sense of definition \ref{def:G1compact}.
In the following,
we relate our work with proper functions. 

Proper functions suffer from an occasional polysemy in the literature.
We employ the meaning in~\cite{Lee:topo}:

\begin{definition}
\label{def:proper}
Let $M$ and $N$ be topological spaces.
A function $f:M\rightarrow N$ is \textbf{proper}
if the anti-image of each compact set of $N$ is compact.
\end{definition}

\noindent
Understood in this precise sense and
as announced,
proper functions are closely related to coercive functions.

\begin{proposition}
\label{lem:coerciveproper}
Let $M$ and $N$ be metric spaces with the Heine-Borel property.
A continuous function $f:M\rightarrow N$
is coercive iff
it is proper.
\end{proposition}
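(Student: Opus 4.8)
The plan is to prove the two implications separately, in each case reducing compactness to the conjunction ``closed and bounded'' via the Heine-Borel property. Throughout I fix a basepoint in each space so that $\|x\|$ denotes distance to that basepoint, exactly as in the notation of lemma \ref{lem:bounded}. The coercivity condition then reads: for every $R>0$ there is an $S>0$ such that $\|f(x)\|>R$ whenever $\|x\|>S$.

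For the forward direction (coercive $\Rightarrow$ proper), I would take an arbitrary compact $K\subset N$ and show that its anti-image $f^{-1}(K)$ is compact. Since $f$ is continuous and $K$ is closed (a compact subset of the metric, hence Hausdorff, space $N$), the set $f^{-1}(K)$ is closed in $M$. It remains to establish boundedness. As $K$ is compact it is bounded, say $K\subset B(0,R)\subset N$, and coercivity supplies an $S>0$ such that $\|x\|>S$ forces $\|f(x)\|>R$, whence $f(x)\notin K$. Therefore every $x\in f^{-1}(K)$ satisfies $\|x\|\le S$, so $f^{-1}(K)\subset B(0,S)\subset M$ is bounded. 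Being closed and bounded, it is compact by the Heine-Borel property of $M$.

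For the converse (proper $\Rightarrow$ coercive), I would argue by contradiction. If $f$ were not coercive, there would exist $R>0$ and a sequence $\{x_n\}$ with $\|x_n\|\to+\infty$ while $\|f(x_n)\|\le R$ for all $n$, i.e. $f(x_n)\in\overline{B(0,R)}\subset N$. This closed ball is closed and bounded, hence compact by the Heine-Borel property of $N$; call it $K$. Then $\{x_n\}\subset f^{-1}(K)$, which is compact by properness and therefore bounded (compact subsets of metric spaces are always bounded). This contradicts $\|x_n\|\to+\infty$, so $f$ must be coercive.

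The argument is essentially routine once the metric reformulation of coercivity is in place; the only point that demands care is bookkeeping which space's Heine-Borel property is invoked where. It is $M$'s in the forward direction, to upgrade the closed and bounded set $f^{-1}(K)$ to a compact one, and it is $N$'s in the converse, to turn the closed ball into a compact target set to which properness applies. A minor subtlety worth flagging is the basepoint-dependence of $\|\cdot\|$: by the triangle inequality, coercivity with respect to one basepoint is equivalent to coercivity with respect to any other, so fixing basepoints entails no loss of generality.
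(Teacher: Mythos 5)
Your proof is correct and follows essentially the same route as the paper's: in the forward direction you show $f^{\neg}(K)$ is closed by continuity and bounded by coercivity, then invoke the Heine-Borel property of $M$; in the converse you apply properness to a closed ball in $N$ (compact by the Heine-Borel property of $N$) and use boundedness of the compact anti-image. The only cosmetic difference is that the paper argues the converse directly (for every $d$ it produces the radius $R_d$ from the bounded anti-image), whereas you phrase it as a contradiction via a sequence with $\|x_n\|\rightarrow+\infty$ --- the underlying content, including your careful bookkeeping of which space's Heine-Borel property is used where, matches the paper exactly.
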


\begin{proof}
\textbf{Coercive $\Rightarrow$ proper}.
Consider an arbitrary compact set $K\subset N$.
As such, it is both closed and bounded.   
Because $f$ is continuous,
the first property implies that $f^\neg(K)$ is also closed.
The second property entails the existence of
a distance  $d_K\in\mathbb{R}^+$
such that $K\subset B(0,d_K)$.
Since $f$ is coercive, 
there exists a corresponding radius $R_{d_K}$ such that, 
for all $x\in M$ with $\|x\|>R_{d_K}$, 
the inequality $\|f(x)\|>d_K$ holds true. In particular, 
\begin{align}
\label{eq:coerprop}
f^\neg(K)\subset f^{\neg}(B(0,d_K))\subset B(0,R_{d_K}).
\end{align}
The ball $B(0,R_{d_K})$ is compact because $M$ has the Heine-Borel property. Consequently,
$f^\neg(K)$ is a compact set:
it is a closed set inside a compact.
    
\textbf{Proper $\Rightarrow$ coercive.}
Given an arbitrary distance $d\in\mathbb{R}^+$,
the ball $B(0,d)\subset N$ is a compact set
because $N$ has the Heine-Borel property. 
Since $f$ is proper,
$f^\neg(B(0,d))$ is a compact set in $M$.
As such,
it is bounded.
Namely,
there exists a radius $R_d$
such that $f^\neg(B(0,d))\subset B(0,R_d)$.
Equivalently,
for all $x\in M$ with $\|x\|>R_d$,
the inequality $\|f(x)\|>d$ holds true.
\end{proof}

Proper functions are relevant to our work
owing to the following two results.
First,
following straightforwardly from definition \ref{def:proper}:
the level sets of a proper function are compact.
Second,
holding true
provided that the initial space is Hausdorff:

\begin{proposition}
\label{lem:coerconf}
Let $\mathcal{P}$ be a Hausdorff space
and let $N$ be a topological space.
A proper function $f:\mathcal{P}\rightarrow N$ is confining. 
\end{proposition}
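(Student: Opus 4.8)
The plan is to reduce the statement to the two facts recorded immediately before it: that a proper function has compact level sets, and that in a Hausdorff space a subset is relatively compact precisely when it sits inside a compact set (as recalled below definition \ref{def:G1compact}). First I would fix an arbitrary point $c\in N$ and consider the level set $f^{\neg}(c)$. Since a singleton $\{c\}$ is trivially compact in any topological space, and $f$ is proper in the sense of definition \ref{def:proper}, the anti-image $f^{\neg}(c)=f^{\neg}(\{c\})$ is compact. This is exactly the first observation stated above, specialized to the compact set $\{c\}$.

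Next I would take an arbitrary path-connected component $A$ of this level set. By construction $A\subseteq f^{\neg}(c)$, so $A$ is contained in a compact set. Invoking the hypothesis that $\mathcal{P}$ is Hausdorff, together with the characterization that in a Hausdorff space a subset is relatively compact with respect to $\mathcal{P}$ if and only if it lies inside a compact set, I conclude that $A$ is relatively compact with respect to $\mathcal{P}$. This is precisely the mechanism already exploited in the proof of theorem \ref{th:conf}.

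Since $c\in N$ and the component $A$ were arbitrary, every path-connected component of every level set of $f$ is relatively compact with respect to $\mathcal{P}$. By definition \ref{def:conf}, this is the statement that $f$ is confining, which completes the argument. There is essentially no hard step here; the only point deserving a moment's care is the use of the Hausdorff hypothesis to pass from ``$A$ lies inside a compact set'' to ``$A$ is relatively compact'', since without Hausdorffness the closure $\overline{A}$ need not be compact even when $A$ is contained in a compact set.
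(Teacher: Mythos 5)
Your proposal is correct and follows essentially the same route as the paper's own proof: both fix an arbitrary $c\in N$, observe that $f^{\neg}(c)$ is compact by properness (the paper records this just before the proposition, your argument makes the singleton step explicit), and then use the Hausdorff hypothesis to conclude that each path-connected component $A\subset f^{\neg}(c)$, being inside a compact set, is relatively compact with respect to $\mathcal{P}$. Your closing remark on why Hausdorffness is genuinely needed is a nice touch but does not change the argument.
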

\begin{proof}
Consider an arbitrary but fixed point $c\in N$.
Let $A$ denote a path-connected component of the level set $f^\neg(c)$.
The latter satisfies $A\subset f^\neg(c)$,
which is compact by definition of a proper function.
Since $M$ is Hausdorff
and $A$ is inside a compact,
$A$ is relatively compact with respect to $M$.
\end{proof}

\noindent
Counterexamples of the converse are (\ref{eq:exth3no2}).

\section{Integrability in light of a confining conserved quantity}
\label{app:integrability}

For completeness,
we begin by introducing the relevant geometric structures
of a mechanical system.
A more general and detailed presentation
can be found in chapter 2 of~\cite{Munoz-Lecanda:2024uyr}.
Alternatively,
see the last chapter in~\cite{Lee}.

Let $\mathcal{P}$ denote a smooth manifold,
which is Hausdorff
by definition~\cite{Lee}.
Further,
let $\omega$ be a symplectic form on $\mathcal{P}$.
Any (at least $C^2$) function $F:\mathcal{P}\rightarrow \mathbb{R}$
has an associated vector field $X_F$,
called the Hamiltonian vector field of $F$,
defined as
\begin{align}
\label{eq:infsym}
\iota_{X_F}\omega=\textrm{d}F.
\end{align}
When $F$ is a conserved quantity $F=Q$,
$X_Q$ is called an infinitesimal symmetry~\cite{Munoz-Lecanda:2024uyr}.
When $F$ is the Hamiltonian (or Lagrangian energy) of the dynamical system $F=H$,
the integral curves of $X_H$ are the solutions of the system.

The assignation of $X_F$ to $F$
induces a canonical Poisson bracket.
For two functions $F,G:\mathcal{P}\rightarrow \mathbb{R}$, 
\begin{align}
\label{eq:Poisson}
\{F,G\}:=
\iota_{X_G}\iota_{X_F}\omega=
\iota_{X_G}\text{d}F=
X_G(F).
\end{align}
When (\ref{eq:Poisson}) vanishes,
$F$ and $G$ are said to be in involution.

Recount definition \ref{def:conf}
of a confining function
and the definition of a complete vector field
--- restated below definition \ref{def:G2formal}.
The following lemma shows that
a confining function
can only be in involution with
functions whose Hamiltonian vector field
generates G1 dynamics,
in the compact sense of definition \ref{def:G1compact}.
This entails the completeness of such Hamiltonian vector fields.

\begin{lem}
\label{lem:involution}
Let $\omega$ be a symplectic form on a smooth manifold $\mathcal{P}$.
Consider two $C^2$ functions $F,G:\mathcal{P}\rightarrow \mathbb{R}$ such that $F$ is confining.
If $F$ and $G$ are in involution,
then the integral curves of $X_G$
are relatively compact with respect to $\mathcal{P}$
and,
hence,
$X_G$ is complete.
\end{lem}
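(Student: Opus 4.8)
The plan is to reduce the statement to our main result, theorem \ref{th:conf}, by recognizing $F$ as a conserved quantity for the dynamics generated by $X_G$, and then to upgrade relative compactness to completeness via the Escape Lemma. First I would unpack the involution hypothesis through the Poisson bracket (\ref{eq:Poisson}): since $\{F,G\}=X_G(F)=0$, the function $F$ is constant along every integral curve of $X_G$. Taking $\textrm{Sol}$ to be the set of maximal integral curves of $X_G$ (each of which is $C^1$, hence continuous, because $G$ being $C^2$ renders $X_G$ a $C^1$ vector field), this says precisely that $F$ is a conserved quantity of $\textrm{Sol}$ in the sense of definition \ref{def:consquant}.

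Second, I would invoke the confining hypothesis. Since $F$ is confining and is a conserved quantity of $\textrm{Sol}$, theorem \ref{th:conf} applies verbatim: the system is G1 in the compact sense of definition \ref{def:G1compact}, so the image of every integral curve of $X_G$ is relatively compact with respect to $\mathcal{P}$. Equivalently, one may replay the short argument used to prove theorem \ref{th:conf}: an integral curve stays inside one path-connected component $A$ of a level set of $F$; as $F$ is confining, $A$ is relatively compact; and as $\mathcal{P}$ is Hausdorff, the curve therefore lies inside a compact set. This establishes the first conclusion.

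Finally, I would deduce completeness. Relative compactness of every integral curve is exactly G1 in the compact sense for the system of integral curves of $X_G$, and by figure~\ref{fig:G1s} this implies G2 in the sense of definition \ref{def:G2formal}. Since $X_G$ is a $C^1$ vector field, the remark below definition \ref{def:G2formal} identifies G2 with completeness via the negation of the Escape Lemma (see \cite{Lee}): a maximal integral curve confined to a compact set cannot have a finite endpoint of its interval of definition, so $I_\gamma=\mathbb{R}$ and $X_G$ is complete. I do not expect a genuine obstacle here, as the result is essentially a corollary of theorem \ref{th:conf} together with the completeness equivalence; the only points requiring care are verifying that $X_G$ is $C^1$ so that the Escape Lemma applies, and noting that it is the continuity of the integral curves that licenses passing from the full level set to a single path-connected component.
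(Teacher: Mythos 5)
Your proposal is correct and follows essentially the same route as the paper: the paper's own proof is precisely the inline replay you describe --- involution via the Poisson bracket forces $F$ to be constant along the integral curves of $X_G$, each curve therefore sits in a path-connected component of a level set of $F$, which is relatively compact since $F$ is confining, and (using that $\mathcal{P}$, being a smooth manifold, is Hausdorff) completeness follows by negation of the Escape Lemma. Your explicit framing as a corollary of theorem \ref{th:conf}, together with the check that $G\in C^2$ makes $X_G$ a $C^1$ vector field, merely makes tidy what the paper leaves implicit.
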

\begin{proof}
If $F$ and $G$ are in involution
\begin{align}
\label{eq:invol}
0=\{F,G\}=X_G(F),
\end{align}
then $F$ is constant along the integral curves $\{\gamma\}$ of $X_G$.
In particular,
each $\gamma$ is inside a path-connected component of a level set of $F$,
which is relatively compact with respect to $\mathcal{P}$
because $F$ is confining.
Since $\mathcal{P}$ is Hausdorff,
each $\gamma$ is inside a compact set
and thus its closure $\overline{\gamma}$ is compact.
By negation of the Escape Lemma (see, e.g. \cite{Lee}),
$X_G$ is complete.
\end{proof}

\noindent
This result is relevant for integrable systems,
in the sense of section \ref{par:relationLAT}.

The celebrated Liouville-Arnold theorem
ensures the existence of action-angle coordinates for integrable systems
that trivialize the dynamics,
provided that the infinitesimal symmetries are complete vector fields.
In more detail,
the theorem ascertains that the phase space of the system is foliated by
(the path-connected components of)
the intersection of the level sets of the conserved quantities. 
Furthermore,
these leaves are diffeomorphic to $\mathbb{T}^k\times\mathbb{R}^{n-k}$,
and there exist
action-angle coordinates 
in terms of which the dynamics correspond to
that of $n$ free particles.
(For a rigorous formulation, see~\cite{Sardana}.)

Assume an integrable system 
defined on $\mathcal{P}$
where one of the conserved quantities $Q$ is confining.
By virtue of lemma \ref{lem:involution},
the infinitesimal symmetries of the conserved quantities in involution
are guaranteed to be complete.
Thus,
the hypotheses of the Liouville-Arnold theorem are satisfied.
Moreover,
since the path-connected components of the level sets of $Q$
are relatively compact with respect to $\mathcal{P}$,
the leaves provided by the Liouville-Arnold theorem are {\it compact}
(i.e.~$k=n$).
It follows that
the dynamics,
described in terms of action-angle variables,
is that of free particles moving {\it exclusively} around torii.

In brief,
theorem \ref{th:conf}
--- and its corollaries theorems \ref{th:energy} and \ref{th:coercive} ---
accept an interpretation as methods to enforce
the completeness of vector fields
associated with symmetries
and the compactness of the leaves
in integrable systems.

\section{Supplementary properties of the system (\ref{eq:exLag})}
\label{app:properties}

In this appendix,
we discuss properties of the system (\ref{eq:exLag})
against the backdrop of section \ref{sec:symmetry}.

\paragraph{Unnatural yet Lagrangian-preserving symmetry of $Q$.}
The symplectic form induced by the Lagrangian \ref{eq:exLag} is
\begin{align}
\label{eq:symforme}
\omega_L=\text{d} x\wedge\text{d}\dot{x}-\text{d} y\wedge\text{d}\dot{y},
\end{align}
and the Hamiltonian function of the system is the Lagrangian energy (\ref{eq:exE}).

The infinitesimal symmetry (\ref{eq:infsym}) associated to
the conserved quantity (\ref{eq:exQ})
via Noether's theorem
is the vector field
\begin{align}
\label{eq:exvfield}
X_Q=
(\dot{x}+c_1\dot{y})\frac{\partial}{\partial x}
-(\dot{y}+c_1\dot{x})\frac{\partial}{\partial y}
-\big[\omega^2(x+c_1y)+c_2f'\big]\frac{\partial}{\partial \dot{x}}
+\big[\omega^2(y+c_1x)+\alpha c_2f'\big]\frac{\partial}{\partial \dot{y}}.
\end{align}
The associated infinitesimal changes
for positions and velocities are
\begin{align}
\label{eq:infchange}
\begin{cases}
x\rightarrow x^\prime=
x+(\dot{x}+c_1\dot{y})\varepsilon, \\
y\rightarrow y^\prime=
y-(\dot{y}+c_1\dot{x})\varepsilon, 
\end{cases}
\qquad
\begin{cases}
\dot{x}\rightarrow \dot{x}^\prime=
\dot{x}-\big[\omega^2(x+c_1y)+c_2f'\big]\varepsilon, \\
\dot{y}\rightarrow \dot{y}^\prime=
\dot{y}+\big[\omega^2(y+c_1x)+\alpha c_2f'\big]\varepsilon,
\end{cases}   
\end{align}
respectively.
Here, $\varepsilon$ is an infinitesimal parameter.
Notice that the changes in positions depend on velocities:
the signature of an unnatural symmetry.

Being a Noether symmetry,
(\ref{eq:exvfield}) leaves invariant
the Lagrangian energy and the symplectic form.
However,
it need not be a symmetry of the Lagrangian.
In this case,
it does turn out to be one:
\begin{align}
\mathcal{L}_{X_Q}L=-2\frac{\text{d}}{\text{d} t}\left[\frac{\omega^2}{2}\left(x^2+y^2+2c_1xy\right)
+c_2f(x+\alpha y)\right].
\end{align}

\paragraph{Bi-Hamiltonian nature of the system.}
A mechanical system is (quasi-)bi-Hamiltonian
(in the sense of~\cite{Brouzet})
if there exists
a compatible symplectic form $\hat{\omega}$
and a function $\hat{H}$
with the same dynamics as the original pair $(\omega_L,E_L)$.

That is the case for the system (\ref{eq:exLag}),
whenever $\alpha\neq\pm1$\footnote{For $\alpha=\pm1$, 
the two-form $\hat{\omega}$ is not regular,
i.e.~it is presymplectic.
This case is not further developed because
the G1 stable examples in sections \ref{sec:ex1}-\ref{sec:ex3}
require $\alpha\neq\pm1$.}. 
Then,
the compatible symplectic form is
\begin{align}
\hat{\omega}=
\text{d} x\wedge \text{d} \dot{x}
+c_1\text{d} x\wedge \text{d} \dot{y}
+\text{d} y\wedge \text{d} \dot{y}
+c_1\text{d} y\wedge \text{d} \dot{x}
\end{align}
and $\hat{H}=Q$ given by (\ref{eq:exQ}).
Explicitly,
the vector fields that encode the dynamics in both settings,
\begin{align}
\iota_{X_{E_L}}\omega_L=\textrm{d}E_L
\,\, \textrm{ and } \,\,
\iota_{\hat{X}_{Q}}\hat{\omega}=\textrm{d}Q,
\end{align}
are equal: $X_{E_L}=\hat{X}_{Q}$.
Besides,
both structures are compatible because
the technical requirement that the Nijenhuis torsion vanishes is met.



\end{document}